\newtheorem{definition}{Definition}
\newtheorem{proposition}{Proposition}
\newcommand{\dtv}{\ensuremath{d_{TV}}}
\newcommand{\primeimp}{$\mathsf{PI}$}
\newcommand{\dnf}{$\mathsf{DNF}$}
\newcommand{\cnf}{$\mathsf{CNF}$}
\newcommand{\nnf}{$\mathsf{NNF}$}
\newcommand{\ddnnf}{$\mathsf{d}$-$\mathsf{DNNF}$}
\newcommand{\dnnf}{$\mathsf{DNNF}$}
\newcommand{\sdnnf}{$\mathsf{SDNNF}$}
\newcommand{\con}{$\mathsf{CON}$}
\newcommand{\mar}{$\mathsf{MAR}$}
\newcommand{\accept}{$\mathsf{Accept}$}
\newcommand{\reject}{$\mathsf{Reject}$}
\newcommand{\waps}{$\mathsf{WAPS}$}
\newcommand{\teq}{\mathsf{Teq}}
\newcommand{\peq}{\mathsf{Peq}}
\newcommand{\asamp}{\mathsf{Samp}}
\newcommand{\awct}{\mathsf{Awct}}
\newcommand{\alglinelabel}{%
	\addtocounter{ALC@line}{-1}%
	\refstepcounter{ALC@line}%
	\label%
}
\title{Testing Probabilistic Circuits \thanks{The accompanying tool, available open source, can be found at \href{https://github.com/meelgroup/barbarik}{https://github.com/meelgroup/teq}. The Appendix is available in the accompanying supplementary material.} \thanks{The authors decided to forgo the old convention of alphabetical ordering of authors in favor of a randomized ordering, denoted by \textcircled{r}. The publicly verifiable record of the randomization is available at \protect\url{https://www.aeaweb.org/journals/policies/random-author-order/search} with confirmation code: icoxrj0sNB2L. For citation of the work, authors request that the citation guidelines by AEA for random author ordering be followed. }}
\author{%
	Yash Pote \quad \textcircled{r} \quad    Kuldeep S. Meel \\
	School of Computing, National University of Singapore
}
\begin{document}

\maketitle

\begin{abstract}
Probabilistic circuits (PCs) are a powerful modeling framework for representing tractable probability distributions over combinatorial spaces. In machine learning and probabilistic programming, one is often interested in understanding whether the distributions learned using PCs are close to the desired distribution. Thus, given two probabilistic circuits, a fundamental problem of interest is to determine whether their distributions are close to each other.

The primary contribution of this paper is a closeness test for PCs with respect to the total variation distance metric. Our algorithm utilizes two common PC queries, counting and sampling. In particular, we provide a poly-time probabilistic algorithm to check the closeness of two PCs, when the PCs support tractable approximate counting and sampling. We demonstrate the practical efficiency of our algorithmic framework via a detailed experimental evaluation of a prototype implementation against a set of 475 PC benchmarks. We find that our test correctly decides the closeness of all 475 PCs within 3600 seconds. 
\end{abstract}
\section{Introduction}\label{introduction}

Probabilistic modeling is at the heart of modern computer science, with applications ranging from image recognition and image generation~\cite{PL00,RMC15} to weather forecasting~\cite{CSG04}. Probabilistic models have a multitude of representations, such as probabilistic circuits (PCs)~\cite{CVV20}, graphical models~\cite{KF09}, generative networks~\cite{GPMX+}, and determinantal point processes~\cite{K12}. Of particular interest to us are PCs, which are known to support guaranteed inference and thus have applications in safety-critical fields such as healthcare~\cite{AH98,ODW00}. In this work, we will focus on PCs that are fragments of the Negation Normal Form ({\nnf}), specifically {\dnnf}s, {\ddnnf}s, {\sdnnf}s, and {\primeimp}s~\cite{DH02}. We refer to the survey by~\citet{CVV20} for more details regarding PCs.

Given two distributions $P$ and $Q$, a fundamental problem is to determine whether they are close. Closeness between distributions is frequently quantified using the total variation (TV) distance, $\dtv(P,Q)=(1/2)\| P-Q \|_1,$ where $\|\cdot\|$ is the $\ell_1$ norm~\cite{LKFO18,CDKS20}. Thus, stated formally, closeness testing is the problem of deciding whether $\dtv(P,Q)\leq \varepsilon$ or $\dtv(P,Q)\geq \eta$ for $0 \leq \varepsilon<\eta \leq 1$.  Determining the closeness of models has applications in AI planning~\cite{DH02}, bioinformatics~\cite{RESG14,SM15,YGMG15} and probabilistic program verification~\cite{DLHM18,MO05}.

Equivalence testing is a special case of closeness testing, where one tests if $\dtv(P,Q)= 0$.  \citet{DH02} initiated the study of equivalence testing of PCs by designing an equivalence test for {\ddnnf}s. An equivalence test is, however, of little use in contexts where the PCs under test encode non-identical distributions that are nonetheless close enough for practical purposes. Such situations may arise due to the use of approximate PC compilation~\cite{CT20} and sampling-based learning of PCs~\cite{PLVM20,PVSM+20}. As a concrete example, consider PCs that are learned via approximate methods such as stochastic gradient descent~\cite{PVSM+20}. In such a case, PCs are likely to converge to close but non-identical distributions. Given two such PCs, we would like to know whether they have converged to distributions close to each other. Thus, we raise the question: \textit{Does there exist an efficient algorithm to test the closeness of two PC distributions?}

In this work, we design the first closeness test for PCs with respect to TV distance, called $\teq$. Assuming the tested PCs allow poly-time approximate weighted model counting and sampling, $\teq$ runs in polynomial time. Formally, given two PC distributions $P$ and $Q$, and three parameters ($\varepsilon$,$\eta$,$\delta$), for closeness($\varepsilon$), farness($\eta$), and  tolerance($\delta$), $\teq$ returns {\accept} if $\dtv(P,Q) \leq \varepsilon$ and {\reject} if $\dtv(P,Q)\geq \eta$ with probability at least $1-\delta$.  $\teq$ makes atmost $O((\eta-\varepsilon)^{-2}\log(\delta^{-1}))$ calls to the sampler and exactly 2 calls to the counter. 

$\teq$ builds on a general distance estimation technique of~\citet{CR14} that estimates the distance between two distributions with a small number of samples. In the context of PCs, the algorithm requires access to an exact sampler and an exact counter. Since not all PCs support exact sampling and counting, we modify the technique presented in~\cite{CR14} to allow for approximate samples and counts. Furthermore, we implement and test $\teq$ on a dataset of publicly available PCs arising from applications in circuit testing. Our results show that closeness testing can be accurate and scalable in practice. 

For some {\nnf} fragments, such as {\dnnf}, no sampling algorithm is known, and for fragments such as {\primeimp}, sampling is known to be NP-hard~\cite{R96}. Since $\teq$ requires access to approximate weighted counters and samplers to achieve tractability, the question of determining the closeness of the PCs mentioned above remains unanswered. Thus, we investigate further and characterize the complexity of closeness testing for a broad range of PCs. Our characterization reveals that PCs from the fragments {\ddnnf}s and {\sdnnf}s can be tested for closeness in poly-time via $\teq$, owing to the algorithms of~\citet{D01a} and~\citet{ACJR20}. We show that the {\sdnnf} approximate counting algorithm of~\citet{ACJR20} can be extended to log-linear {\sdnnf}s using chain formulas~\cite{CFMV15}. Then, using previously known results, we also find that there are no poly-time equivalence tests for PCs from {\primeimp} and {\dnnf}, conditional on widely believed complexity-theoretic conjectures. Our characterization also reveals some open questions regarding the complexity of closeness and equivalence testing of PCs.

The rest of the paper is organized in the following way. We define the notation and discuss related work in Section~\ref{preliminaries}. We then present the main contribution of the paper, the closeness test $\teq$, and the associated proof of correctness in Section~\ref{theory}. We present our experimental findings in Section~\ref{evaluation}, and then discuss the complexity landscape of closeness testing in Section~\ref{sec:characterization}. We conclude the paper and discuss some open problems in Section~\ref{conclusion}. Due to space constraints, we defer some proofs to the supplementary Section~\ref{appendix}.
\section{Background}\label{preliminaries}
Let $\varphi: \{0,1\}^n \rightarrow \{0,1\}$ be a circuit over $n$ Boolean variables. An assignment $\sigma \in \{0,1\}^{n}$ to the variables of $\varphi$ is a \textit{satisfying assignment} if $\varphi(\sigma) = 1$. The set of all satisfying assignments of $\varphi$ is $R_\varphi$. If $|R_{\varphi}|>0$, then $\varphi$ is said to be \emph{satisfiable} and if $|R_{\varphi}| = 2^n$, then $\varphi$ is said to be \emph{valid}. We use $|\varphi|$ to denote the size of circuit $\varphi$, where the size is the total number of vertices and edges in the circuit DAG.

The polynomial hierarchy (PH) contains the classes $\Sigma_1^\text{P}$ (NP) and $\Pi_1^\text{P}$ (co-NP) along with generalizations of the form $\Sigma_{i}^\text{P}$ and $\Pi_{i}^\text{P}$ where $\Pi^\text{P}_{i+1} = \text{co-NP}^{\Pi_i^P}$ and $\Sigma_{i+1}^P = \text{NP}^{\Sigma_i^P}$~\cite{S76}. The classes ${\Sigma_i^P}$ and ${\Pi_i^P}$ are said to be at level $i$. If it is shown that two classes on the same or consecutive levels are equal, the hierarchy collapses to that level. Such a collapse is considered unlikely, and hence is used as the basic assumption for showing hardness results, including the ones we present in the paper.

\subsection{Probability distributions}

A weight function $\mathtt{w}: \{0,1\}^n \rightarrow \mathbb{Q}^+$ assigns a positive rational weight to each assignment $\sigma$. We extend the definition of $\mathtt{w}$ to also allow circuits as input: $\mathtt{w}(\varphi) =  \underset{\sigma \in R_{\varphi}}{\sum}\mathtt{w}(\sigma)$.
For weight function $\mathtt{w}$ and circuit $\varphi$,  $\mathtt{w}(\varphi)$ is  the weighted model count (WMC) of $\varphi$ w.r.t. $\mathtt{w}$. 

In this paper, we focus on log-linear weight functions as they capture a wide class of distributions, including those arising from graphical models, conditional random fields, and skip-gram models~\cite{M12}. Log-linear models are represented as literal-weighted functions, defined as:
\begin{definition}
	For a set $X$ of $n$ variables,  a weight function $\mathtt{w}$ is called literal-weighted if there is a poly-time computable map $\mathtt{w}:X \rightarrow \mathbb{Q} \cap (0,1)$ such that for any assignment $\sigma \in \{0,1\}^n:$
	\begin{align*}
		\mathtt{w}(\sigma) = \prod_{x \in \sigma} 
		\begin{cases}
			\mathtt{w}(x)&if \quad x=1 \\
			1-\mathtt{w}(x)&if \quad x=0 
		\end{cases}
	\end{align*}
\end{definition}
For all circuits $\varphi$, and log-linear weight functions $\mathtt{w}$,  $\mathtt{w}(\varphi)$ can be represented in size polynomial in the input.

\paragraph{Probabilistic circuits:} A probabilistic circuit is a satisfiable circuit $\varphi$ along with a weight function $\mathtt{w}$. $\varphi$ and $\mathtt{w}$ together define a discrete probability distribution on the set $\{0,1\}^n$ that is supported over $R_\varphi$. We denote the p.m.f. of this distribution as: 
$
P(\varphi, \mathtt{w})(\sigma) = 
\begin{cases}
	0& \varphi(\sigma) = 0\\
	\mathtt{w}(\sigma)/\mathtt{w}(\varphi)& \varphi(\sigma) = 1
\end{cases}$

In this paper, we study circuits that are fragments of the Negation Normal Form ({\nnf}). A circuit $\varphi$ in {\nnf} is a rooted, directed acyclic graph (DAG), where each leaf node is labeled with true, false, $v$ or $\lnot v$; and each internal node is labeled with a $\wedge$ or $\vee$ and can have arbitrarily many children. We focus on four fragments of {\nnf}, namely, Decomposable {\nnf}(\dnnf), deterministic-{\dnnf}(\ddnnf), Structured {\dnnf}(\sdnnf), and Prime Implicates(\primeimp). For further information regarding circuits in {\nnf}, refer to the survey~\cite{D02} and the paper~\cite{PD08}.

The TV distance  of two probability distributions $P(\varphi_1,\mathtt{w_1})$ and $P(\varphi_2,\mathtt{w_2})$ over $\{0,1\}^n$ is defined as:  
$\dtv(P(\varphi_1,\mathtt{w_1}),P(\varphi_2,\mathtt{w_2}))= \frac{1}{2}\sum_{\sigma \in \{0,1\}^n} |P(\varphi_1,\mathtt{w_1})(\sigma) - P(\varphi_2,\mathtt{w_2})(\sigma)|$.

$P(\varphi_1,\mathtt{w_1})$ and $P(\varphi_2,\mathtt{w_2})$ are said to be (1) equivalent if $\dtv(P(\varphi_1,\mathtt{w_1}), P(\varphi_2,\mathtt{w_2})) = 0$, (2) $\varepsilon$-close if   $\dtv(P(\varphi_1,\mathtt{w_1}), P(\varphi_2,\mathtt{w_2})) \leq \varepsilon$, and (3) $\eta$-far if $\dtv(P(\varphi_1,\mathtt{w_1}), P(\varphi_2,\mathtt{w_2}))\geq  \eta$.

Our closeness testing algorithm $\teq$, assumes access to an approximate weighted counter $\awct(\alpha, \beta, \varphi , \mathtt{w})$, and an approximate weighted sampler $\asamp(\alpha, \beta, \varphi , \mathtt{w})$. We define their behavior as follows:

\begin{definition}\label{def:awct}
	$\awct(\alpha, \beta, \varphi , \mathtt{w})$ takes a circuit $\varphi$, a weight function $\mathtt{w}$, a tolerance parameter $\alpha >  0$ and a confidence parameter $\beta > 0$ as input and returns the approximate weighted model count of $\varphi$ w.r.t. $\mathtt{w}$ such that
	\begin{align*}
		\Pr\left[\frac{\mathtt{w}(\varphi)}{1+\alpha} \leq {\awct}(\alpha, \beta, \varphi , \mathtt{w}) \leq (1+\alpha)\mathtt{w}(\varphi)  \right] \geq 1-\beta
	\end{align*}
	Tractable approximate counting algorithms for PCs are known as Fully Polynomial Randomised Approximation Schemes (FPRAS). The running time of an FPRAS is given by $T(\alpha, \beta, \varphi) = poly(\alpha^{-1}, \log(\beta^{-1}),|\varphi|)$.
\end{definition}

\begin{definition}\label{def:asamp}
	$\asamp(\alpha, \beta, \varphi , \mathtt{w})$ takes a circuit $\varphi$, a weight function $\mathtt{w}$, a tolerance parameter $\alpha > 0$ and a confidence parameter $\beta > 0$ as input and returns either (1) a satisfying assignment $\sigma$ sampled approximately w.r.t. weight function $\mathtt{w}$ with probability $\geq  1-\beta $ or (2) a symbol $\perp$ indicating failure with probability  $<\beta$. In other words, whenever $\asamp$ samples $\sigma$: 
	\begin{align*}
		\frac{P(\varphi,\mathtt{w})(\sigma)}{1+\alpha} \leq \Pr[\asamp(\alpha, \beta, \varphi , \mathtt{w}) = \sigma] \leq (1+\alpha)P(\varphi,\mathtt{w})(\sigma)
	\end{align*}
	Tractable approximate sampling algorithms for PCs are known as Fully Polynomial Almost Uniform Samplers (FPAUS). The running time of an FPAUS for a single sample is given by $T(\alpha, \beta, \varphi) = poly(\alpha^{-1}, \log(\beta^{-1}),|\varphi|)$.
\end{definition}

In the rest of the paper $[m]$ denotes the set $\{1,2,\ldots m\}$, $\mathbbm{1}(e)$ represents the indicator variable for event $e$, and $\mathbbm{E}(v)$ represents the expectation of random variable $v$. 

\subsection{Related work}

\paragraph{Closeness testing:} 
Viewing circuit equivalence testing through the lens of distribution testing, we see that the {\ddnnf} equivalence test of~\citet{DH02} can be interpreted as an equivalence test for uniform distribution on the satisfying assignments of {\ddnnf}s. This relationship between circuit equivalence testing and closeness testing lets us rule out the existence of distributional equivalence tests for all those circuits for which circuit equivalence is already known to be hard under complexity-theoretic assumptions. We will explore this further in Section~\ref{sec:hardness}.

\paragraph{Distribution testing:} Discrete probability distributions are typically defined over an exponentially large number of points; hence a lot of recent algorithms research has focused on devising tests that require access to only a sublinear or even constant number of points in the distribution~\cite{C20}. In this work, we work with distributions over $\{0,1\}^n$, and thus we aim to devise algorithms with running time at most polynomial in $n$. Previous work in testing distributions over Boolean functions has focused on the setting where the distributions offer pair-conditional sampling access~\cite{CM19,MPC20}. Using pair-conditional sampling access,~\citet{MPC20} were able to test distributions for closeness using $\Tilde{O}(tilt(\varphi)^2/(\eta-6\varepsilon)^2\eta)$ queries, where $tilt$ is the ratio of the probabilities of the most and least probable element in the support.

\section{$\teq$: a tractable algorithm for closeness testing}\label{theory}

In this section, we present the main contribution of the paper: a closeness test for PCs, $\teq$. The pseudocode of $\teq$ is given in Algorithm~\ref{alg:epsapproximateprobeq}. 

Given satisfiable circuits $\varphi_1,\varphi_2$ and weight functions $\mathtt{w_1},\mathtt{w_2}$ along with parameters $(\varepsilon,\eta,\delta)$, $\teq$ decides whether the TV distance between $P(\varphi_1, \mathtt{w_1})$ and $P(\varphi_2, \mathtt{w_2})$ is lesser than $\varepsilon$ or greater than $\eta$ with confidence at least $1-\delta$. $\teq$ assumes access to an approximate weighted counter $\awct(\alpha, \beta, \varphi , \mathtt{w})$, and an approximate weighted sampler $\asamp(\alpha, \beta, \varphi , \mathtt{w})$. We define their behavior in the following two definitions.

\paragraph{The algorithm}
$\teq$ starts by computing constants $\gamma$ and $m$. Then it queries the $\awct$ routine with circuit $\varphi_1$ and weight function $\mathtt{w_1}$ to obtain a $\sqrt{1+\gamma/4}-1$ approximation of $\mathtt{w_1}(\varphi_1)$ with confidence at least $1-\delta/8$. A similar query is made for $\varphi_2$ and $\mathtt{w_2}$ to obtain an approximate value for $\mathtt{w_2}(\varphi_2)$. These values are stored in $k_1$ and $k_2$, respectively. $\teq$ maintains a $m$-sized array $\Gamma$, to store the estimates for $r(\sigma_i)$. $\teq$ now iterates $m$ times. In each iteration, it generates one sample $\sigma_i$ through the $\asamp$ call on line~\ref{line:asamp}. There is a small probability of at most $\delta/4m$  that this call fails and returns $\perp$. $\teq$ only samples from one of the two PCs.

The algorithm then proceeds to compute the weight of assignment $\sigma_i$ w.r.t. the weight functions $\mathtt{w_1}$ and $\mathtt{w_2}$ and stores it in $s_1$ and $s_2$, respectively. Using the weights and approximate weighted counts stored in $k_1, k_2$ the algorithm computes the value $r(\sigma_i)$ on line~\ref{line:r2}, where $r(\sigma_i)$ is an approximation of the ratio of the probability of $\sigma_i$ in the distribution $P(\varphi_2, \mathtt{w_2})$ to its probability in $P(\varphi_1, \mathtt{w_1})$. Since $\sigma_i$ was sampled from $P(\varphi_1, \mathtt{w_1})$, its probability in $P(\varphi_1, \mathtt{w_1})$ cannot be 0,  ensuring that there is no division by 0. If the ratio $r(\sigma_i)$ is less than 1, then $\Gamma[i]$ is updated with the value $1-r(\sigma_i)$ otherwise the value of $\Gamma[i]$ remains 0. After the $m$ iterations, $\teq$ sums up the values in the array $\Gamma$. If the sum is found to be less than threshold $m(\varepsilon + \gamma)$, $\teq$ returns {\accept} and otherwise returns {\reject}.

\begin{algorithm}[]
	\caption{$\teq(\varphi_1,\mathtt{w_1}, \varphi_2,\mathtt{w_2},\varepsilon,\eta,  \delta)$}\label{alg:epsapproximateprobeq}
	\begin{algorithmic}[1]
		\STATE $\gamma \gets (\eta-\varepsilon)/2$
		\STATE $m \gets \lceil 2\log(4/\delta)/\gamma^2\rceil $ \alglinelabel{line:m2}
		\STATE	$\Gamma \gets [0]*m$
		\STATE	$k_1 \gets \awct(\sqrt{1+\gamma/4}-1, \delta/8, \varphi_1,\mathtt{w_1})$\alglinelabel{line:awct1}
		\STATE	$k_2 \gets \awct(\sqrt{1+\gamma/4}-1, \delta/8, \varphi_2,\mathtt{w_2})$\alglinelabel{line:awct2}
		\FOR{$i \in \{1,2\ldots,m\}$}
		\STATE$ \sigma_i \gets \asamp(\gamma/(4\eta-2 \gamma),\delta/4m,\varphi_1,\mathtt{w_1})$\alglinelabel{line:asamp}\\
		\IF{$\sigma_i \not = \perp$}
		\STATE $s_1 \gets \mathtt{w_1}(\sigma_i),s_2 \gets \mathtt{w_2}(\sigma_i)$
		\STATE $r(\sigma_i) \gets \frac{s_2}{k_2}\cdot \frac{k_1}{s_1}$\alglinelabel{line:r2}
		\IF{$r(\sigma_i) <1$\alglinelabel{line:checkri}}
		\STATE $\Gamma[i] \gets 1-r(\sigma_i)$\alglinelabel{line:setgammavalue}
		\ENDIF
		\ENDIF
		\ENDFOR
		\IF{$\sum_{i \in [m]} \Gamma[i] \leq m(\varepsilon+\gamma)$\alglinelabel{line:if2}}
		\STATE {\textbf{Return} {\accept}}
		\ELSE
		\STATE { \textbf{Return} {\reject}}
		\ENDIF
	\end{algorithmic}
\end{algorithm}
The following theorem asserts the correctness of $\teq$.  
\begin{restatable}{thm}{tolapprox}\label{thm:tolapprox}
	Given two satisfiable probabilistic circuits $\varphi_1,\varphi_2$ and weight functions $\mathtt{w_1},\mathtt{w_2}$, along with parameters $\varepsilon<\eta<1$ and $\delta <1$,
	\begin{enumerate}
		\item[A.] If $\dtv(P(\varphi_1,\mathtt{w_1}), P(\varphi_2,\mathtt{w_2})) \leq \varepsilon$, then $\teq(\varphi_1,\mathtt{w_1}, \varphi_2,\mathtt{w_2},\varepsilon,\eta,  \delta)$ returns {\accept} with probability at least $ (1-\delta)$.
		\item[B.] If $\dtv(P(\varphi_1,\mathtt{w_1}),P(\varphi_2,\mathtt{w_2})) \geq  \eta$, then $\teq(\varphi_1,\mathtt{w_1}, \varphi_2,\mathtt{w_2},\varepsilon,\eta,  \delta)$ returns {\reject} with probability at least $ (1-\delta)$.
	\end{enumerate} 
\end{restatable}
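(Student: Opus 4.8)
The plan is to recognize the sum $\sum_{i}\Gamma[i]$ as an empirical estimator of the TV distance, following the distance-estimation identity of~\cite{CR14}. Write $P = P(\varphi_1,\mathtt{w_1})$ and $Q = P(\varphi_2,\mathtt{w_2})$, and for $\sigma \in R_{\varphi_1}$ put $\rho(\sigma) = Q(\sigma)/P(\sigma)$. The starting point is the elementary identity
\begin{align*}
\dtv(P,Q) = \sum_{\sigma : P(\sigma) > Q(\sigma)} \big(P(\sigma) - Q(\sigma)\big) = \mathbb{E}_{\sigma \sim P}\big[\max(0, 1 - \rho(\sigma))\big],
\end{align*}
which holds because $\sum_\sigma (P(\sigma)-Q(\sigma)) = 0$ forces the positive and negative parts of $P-Q$ to carry equal mass. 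Each $\Gamma[i]$ equals $\max(0, 1 - r(\sigma_i))$, where $r(\sigma_i) = (s_2/k_2)(k_1/s_1)$ is a plug-in estimate of $\rho(\sigma_i)$ built from the exact literal weights $s_1,s_2$ and the approximate counts $k_1,k_2$. Since $k_1,k_2$ are computed once before the loop and each $\sigma_i$ comes from an independent $\asamp$ call, the variables $\Gamma[i]\in[0,1]$ are i.i.d.\ conditioned on the counts and on all samples succeeding, so $\tfrac1m\sum_i\Gamma[i]$ concentrates around its mean by Hoeffding. The whole proof then reduces to controlling the gap between that mean and $\dtv(P,Q)$.

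First I would isolate the deterministic ``good'' events and pay for them up front by a union bound: both $\awct$ calls land inside their multiplicative windows (failure $\le \delta/8+\delta/8 = \delta/4$), and all $m$ $\asamp$ calls return a sample rather than $\perp$ (failure $\le m\cdot\delta/4m = \delta/4$). Conditioned on the counting event, $k_1,k_2$ lie within factor $\sqrt{1+\gamma/4}$ of the true WMCs, so $r(\sigma)$ lies within factor $(1+\gamma/4)$ of $\rho(\sigma)$; since $x\mapsto\max(0,1-x)$ is $1$-Lipschitz, a short case split on $\rho\le 1$ versus $\rho>1$ yields the pointwise bound $|\max(0,1-r(\sigma))-\max(0,1-\rho(\sigma))|\le \gamma/4$. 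Writing $\tilde P$ for the (conditional) law of a returned sample, the sampler guarantee gives $P(\sigma)/(1+\alpha_s)\le\tilde P(\sigma)\le (1+\alpha_s)P(\sigma)$ with $\alpha_s=\gamma/(4\eta-2\gamma)$, so the non-negativity of $\max(0,1-\cdot)$ turns the sampling error into a clean multiplicative distortion of the expectation.

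Combining the two errors, in the soundness direction I would chain $\mathbb{E}[\Gamma[i]] \le (1+\alpha_s)\big(\dtv(P,Q)+\gamma/4\big)$, and in the completeness direction $\mathbb{E}[\Gamma[i]] \ge \big(\dtv(P,Q)-\gamma/4\big)/(1+\alpha_s)$. Substituting $\dtv\le\varepsilon$ into the first and $\dtv\ge\eta$ into the second, the specific choice $\alpha_s=\gamma/(4\eta-2\gamma)$ is exactly what makes the algebra collapse to $\mathbb{E}[\Gamma[i]]\le \varepsilon+\gamma/2$ in case A and $\mathbb{E}[\Gamma[i]]\ge \eta-\gamma/2 = (\varepsilon+\gamma)+\gamma/2$ in case B (both reductions ultimately needing only $\varepsilon\le\eta$). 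Finally, with $m=\lceil 2\log(4/\delta)/\gamma^2\rceil$, Hoeffding bounds $\Pr[\,|\tfrac1m\sum_i\Gamma[i]-\mathbb{E}[\Gamma[i]]|>\gamma/2\,]\le 2\exp(-2m(\gamma/2)^2)\le \delta/2$. In case A the empirical mean then falls below the threshold $\varepsilon+\gamma$ and in case B above it, so the correct verdict ({\accept} resp.\ {\reject}) is returned unless a bad event fires; a union bound over $\delta/4+\delta/4+\delta/2=\delta$ gives confidence $1-\delta$.

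The main obstacle is the bias analysis of the second and third paragraphs: the estimator is biased by both the approximate counts and the approximate sampler, and the two tolerances must be tuned so that their combined contribution, together with the Hoeffding fluctuation $\gamma/2$, never exceeds the half-gap $\gamma$ between $\varepsilon$ and the acceptance threshold. The delicate point is that the counting tolerance enters additively (after the Lipschitz step) while the sampling tolerance enters multiplicatively against a distance as large as $\eta$, so a naive even split of the budget fails; the tight constant $\alpha_s=\gamma/(4\eta-2\gamma)$ is forced by making the completeness inequality hold with near-equality. A secondary wrinkle is that conditioning on sampler success perturbs $\tilde P$ by a factor $1/(1-\beta)$ with $\beta=\delta/4m$, which must be checked to be harmless.
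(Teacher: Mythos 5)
Your proposal is correct and follows essentially the same route as the paper's proof: the same identity $\dtv(P_1,P_2)=\mathbb{E}_{\sigma\sim P_1}[\max(0,1-P_2(\sigma)/P_1(\sigma))]$, the same union bound over counter/sampler failures ($\delta/4+\delta/4+\delta/2$), the same multiplicative $(1+\gamma/4)$ control of $r(\sigma)$ (the paper's Lemma~\ref{lem:allerrors}), the same mean bounds $\varepsilon+\gamma/2$ and $\eta-\gamma/2$ via the sampler tolerance $\gamma/(4\eta-2\gamma)$, and the same Chernoff--Hoeffding step with $m=\lceil 2\log(4/\delta)/\gamma^2\rceil$. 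The only cosmetic difference is that your pointwise Lipschitz case split on $\rho\le 1$ versus $\rho>1$ replaces, and slightly streamlines, the paper's three-set decomposition $S_1,S_2,S_3$ in Lemma~\ref{lem:boundonA}, reaching the identical conclusion $|A-\dtv(P_1,P_2)|\le\gamma/4$.
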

The following theorem states the running time of the algorithm,
\begin{restatable}{thm}{runtime}\label{thm:runtime}
Let $\gamma = \eta -\varepsilon$, then the time complexity of $\teq$ is in $O\left(T_{\awct}(\gamma,\delta, \max(|\varphi_1|,|\varphi_2|)) + T_{\asamp}(\gamma,\delta, \max(|\varphi_1|,|\varphi_2|))\frac{\log(\delta^{-1})}{\gamma^2}\right)$. If the underlying PCs support approximate counting and sampling in polynomial time, then the running time of $\teq$ is also polynomial in terms of $\gamma, \log(\delta^{-1})$ and $max(|\varphi_1|, |\varphi_2|)$.  
\end{restatable}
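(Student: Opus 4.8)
The plan is to charge the running time of $\teq$ line by line against the two oracle families, the counter $\awct$ and the sampler $\asamp$, and show everything else is dominated. Write $\gamma = \eta-\varepsilon$ for the parameter named in the theorem, and note that the constant computed on line~1 is $\gamma/2$; this internal factor of $2$ will be absorbed freely into the big-$O$. Lines~1--2 and the final comparison on line~\ref{line:if2} cost $O(1)$ and $O(m)$ arithmetic operations, where $m = \lceil 2\log(4/\delta)/(\gamma/2)^2\rceil = O(\gamma^{-2}\log(\delta^{-1}))$. Inside the loop, computing $s_1 = \mathtt{w_1}(\sigma_i)$ and $s_2 = \mathtt{w_2}(\sigma_i)$ is $\mathrm{poly}(n)$-time since $\mathtt{w_1},\mathtt{w_2}$ are literal-weighted (Definition~1), while forming $r(\sigma_i)$ and updating $\Gamma[i]$ are $O(1)$. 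Hence the total non-oracle work is $O(m\cdot\mathrm{poly}(n))$, and since any sampler that returns an $n$-bit assignment already costs $\mathrm{poly}(n)$, this term is dominated by the sampler cost and can be ignored at the end.

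I would first bound the two counter calls on lines~\ref{line:awct1}--\ref{line:awct2}. Their tolerance is $\alpha = \sqrt{1+\gamma/8}-1$; using the identity $\sqrt{1+x}-1 = x/(1+\sqrt{1+x})$ with $x=\gamma/8 \in (0,1/8)$ shows $\alpha = \Theta(\gamma)$, so $\alpha^{-1} = \Theta(\gamma^{-1})$. Their confidence is $\delta/8$, giving $\log(\beta^{-1}) = \log(8/\delta) = O(\log(\delta^{-1}))$. Because $T_{\awct}$ is polynomial in $\alpha^{-1}$ and $\log(\beta^{-1})$, each call runs in $O\!\left(T_{\awct}(\gamma,\delta,\max(|\varphi_1|,|\varphi_2|))\right)$, and there are exactly two of them. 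Next I would bound the $m$ sampler calls on line~\ref{line:asamp}. Their tolerance is $\frac{\gamma/2}{4\eta-\gamma} = \frac{\gamma}{2(3\eta+\varepsilon)}$, whose inverse $2(3\eta+\varepsilon)/\gamma$ lies in $(0,8/\gamma)$ since $0<\varepsilon<\eta<1$ forces $3\eta+\varepsilon<4$; thus the inverse tolerance is $O(\gamma^{-1})$. Their confidence is $\delta/4m$, so $\log(\beta^{-1}) = \log(4m/\delta) = O\!\left(\log(\delta^{-1}) + \log(\gamma^{-1})\right)$ after substituting $m=O(\gamma^{-2}\log(\delta^{-1}))$. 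Since $T_{\asamp}$ is polynomial in $\log(\beta^{-1})$, the extra $\log(\gamma^{-1})$ (and $\log\log(\delta^{-1})$) terms contribute only polynomial overhead absorbed into the bound written with the nominal arguments $(\gamma,\delta)$, so each call is $O\!\left(T_{\asamp}(\gamma,\delta,\max(|\varphi_1|,|\varphi_2|))\right)$.

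Summing over the $m = O(\gamma^{-2}\log(\delta^{-1}))$ iterations, the sampler contributes $T_{\asamp}(\gamma,\delta,\cdot)\cdot O(\gamma^{-2}\log(\delta^{-1}))$, the counter contributes $O(T_{\awct}(\gamma,\delta,\cdot))$, and the $O(m\cdot\mathrm{poly}(n))$ bookkeeping is dominated as noted above, which yields exactly the claimed $O\!\left(T_{\awct}(\gamma,\delta,\cdot) + T_{\asamp}(\gamma,\delta,\cdot)\,\frac{\log(\delta^{-1})}{\gamma^2}\right)$. The polynomial-time corollary then follows immediately: if the PCs admit an FPRAS and FPAUS, then $T_{\awct}(\gamma,\delta,\cdot)$ and $T_{\asamp}(\gamma,\delta,\cdot)$ are each $\mathrm{poly}(\gamma^{-1},\log(\delta^{-1}),\max(|\varphi_1|,|\varphi_2|))$, and multiplying by the $O(\gamma^{-2}\log(\delta^{-1}))$ factor preserves polynomiality. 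I expect the only real subtlety — the ``main obstacle'' in an otherwise routine accounting — to be reconciling the actual oracle arguments with the clean $(\gamma,\delta)$ appearing in the statement: namely verifying that both tolerances are $\Theta(\gamma)$ (so the constants survive being raised to a polynomial power) and that the per-sample confidence $\delta/4m$ inflates $\log(\beta^{-1})$ only by a polynomially-absorbed $\log(\gamma^{-1})+\log\log(\delta^{-1})$ term.
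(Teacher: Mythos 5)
Your proposal is correct and follows essentially the same route as the paper's proof: charge the two $\awct$ calls and the $m = O(\gamma^{-2}\log(\delta^{-1}))$ $\asamp$ calls against the oracle runtimes, show both inverse tolerances are $O(\gamma^{-1})$ (you use the identity $\sqrt{1+x}-1 = x/(1+\sqrt{1+x})$ where the paper uses $1+\tfrac{x}{2}-\tfrac{x^2}{2}\leq\sqrt{1+x}$, to the same effect), and absorb the $\log(4m/\delta)$ inflation into the polynomial bound. Your explicit accounting of the non-oracle bookkeeping and of the factor-of-two mismatch between the theorem's $\gamma=\eta-\varepsilon$ and the algorithm's internal $\gamma=(\eta-\varepsilon)/2$ are minor refinements the paper leaves implicit.
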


To improve readability, we use $P_1$ to refer to the distribution $P(\varphi_1,\mathtt{w_1})$ and $P_2$ to refer to  $P(\varphi_2,\mathtt{w_2})$.

\subsection{Proving the correctness of $\teq$}

In this subsection, we present the theoretical analysis of $\teq$, and the proof of  Theorem~\ref{thm:tolapprox}(A). We will defer the proofs of Theorem~\ref{thm:tolapprox}(B) and Theorem~\ref{thm:runtime} to the supplementary Section~\ref{sec:proof6} and Section~\ref{sec:proof7}, respectively.

For the purpose of the proof, we will first define events $\mathtt{Pass_1}, \mathtt{Pass_2}$ and $\mathtt{Good}$. Events $\mathtt{Pass_1}, \mathtt{Pass_2}$ are defined w.r.t. the function calls $\awct(\sqrt{1+\gamma/4}-1, \delta/8, \varphi_1,\mathtt{w_1})$ and $\awct(\sqrt{1+\gamma/4}-1, \delta/8, \varphi_2,\mathtt{w_2})$, respectively (as on lines~\ref{line:awct1},~\ref{line:awct2}  of Algorithm~\ref{alg:epsapproximateprobeq}). $\mathtt{Pass_1}$ and $\mathtt{Pass_2}$ represent the events that the two calls correctly return $\sqrt{1+\gamma/4}$  approximations of the weighted model counts of $\varphi_1$ and $\varphi_2$ i.e. $
	\frac{\mathtt{w_1}(\varphi_1)}{\sqrt{1+\gamma/4}}\leq \awct(\sqrt{1+\gamma/4}-1, \delta/8,\varphi_1, \mathtt{w_1}) \leq  (\sqrt{1+\gamma/4})\mathtt{w_1}(\varphi_1)$, and $
	\frac{\mathtt{w_2}(\varphi_2)}{\sqrt{1+\gamma/4}}\leq \awct(\sqrt{1+\gamma/4}-1, \delta/8,\varphi_2, \mathtt{w_2}) \leq  (\sqrt{1+\gamma/4} )\mathtt{w_2}(\varphi_2)
$. From the definition of $\awct$,  we have $\Pr[\mathtt{Pass_1}],\Pr[\mathtt{Pass_2}] \geq 1-\delta/8$. 

Let $\mathtt{Fail_i}$ denote the event that $\asamp$ (Algorithm~\ref{alg:epsapproximateprobeq}, line~\ref{line:asamp}) returns the symbol $\perp$ in the $i$th iteration of the loop. By the definition of $\asamp$ we know that $\forall_{i \in [m]}\Pr[\mathtt{Fail_i}]< \delta/4m$.

The analysis of $\teq$ requires that all $m$ $\asamp$ calls and both $\awct$ calls return correctly. We denote this super-event as $\mathtt{Good} = \bigcap_{i \in [m]}\overline{\mathtt{Fail_i}} \cap \mathtt{Pass_1} \cap \mathtt{Pass_2}$. Applying the union bound we see that the probability of all calls to $\awct$ and $\asamp$ returning without error is at least $1-\delta/2$:
\begin{align}
	\Pr[\mathtt{Good}]  &= 1- \Pr[\bigcup_{i \in [m]} \mathtt{Fail_i} \cup \overline{\mathtt{Pass_1}} \cup \overline{\mathtt{Pass_2}}]\geq 1 - m\cdot \delta/4m -2\cdot \delta/8 = 1- \delta/2 \label{line:excludebadevents} 
\end{align}

We will now state a lemma, which we will prove in the supplementary Section~\ref{sec:proof9}.

\begin{restatable}{lem}{allerrors}\label{lem:allerrors}
	$\mathtt{Good} \rightarrow  \left| r(\sigma)-\frac{P_2(\sigma)}{P_1(\sigma)} \right| \leq  \gamma/4\cdot \frac{P_2(\sigma)}{P_1(\sigma)}  $
\end{restatable}

We now prove the lemma critical for our proof of correctness of $\teq$.
\begin{restatable}{lem}{boundonA}\label{lem:boundonA}
	Assuming the event $\mathtt{Good}$, let $A = \sum_{\sigma \in \{0,1\}^n} \mathbbm{1}\left(r(\sigma)<1\right)\left(1 -r(\sigma)\right) P_1(\sigma)
	$ , then
	\begin{enumerate}
		\item[1.] If 	$\dtv(P_1,P_2) \leq \varepsilon$	, then $A \leq  \varepsilon +\gamma/4 $
		\item[2.] If    $\dtv(P_1,P_2)	\geq \eta$ , then $A \geq   \eta - \gamma/4$
	\end{enumerate} 
\end{restatable}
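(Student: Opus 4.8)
The quantity $A$ is exactly the expected value of $\mathbbm{1}(r(\sigma)<1)(1-r(\sigma))$ when $\sigma$ is drawn from $P_1$, since summing against $P_1(\sigma)$ is taking that expectation. The key idea is that $A$ is an approximation to the \emph{one-sided} total variation mass, namely $\sum_{\sigma:\,P_2(\sigma)<P_1(\sigma)}\bigl(P_1(\sigma)-P_2(\sigma)\bigr)$, which equals $\dtv(P_1,P_2)$ by the standard identity that TV distance equals the mass that $P_1$ puts above $P_2$. So I would first establish a clean ``ideal'' version of $A$ and then show $A$ is within $\gamma/4$ of it using Lemma~\ref{lem:allerrors}.

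\textbf{Step 1: the ideal quantity.} Define $B = \sum_{\sigma}\mathbbm{1}\!\left(\tfrac{P_2(\sigma)}{P_1(\sigma)}<1\right)\bigl(1-\tfrac{P_2(\sigma)}{P_1(\sigma)}\bigr)P_1(\sigma)$, where the sum ranges over $\sigma$ with $P_1(\sigma)>0$ (the support of the sampling distribution). Multiplying through, each term is $\mathbbm{1}(P_2(\sigma)<P_1(\sigma))\bigl(P_1(\sigma)-P_2(\sigma)\bigr)$, so $B$ is precisely the positive-part mass $\sum_{\sigma:\,P_2(\sigma)<P_1(\sigma)}(P_1(\sigma)-P_2(\sigma))$. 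By the identity $\dtv(P_1,P_2)=\sum_{\sigma:\,P_1(\sigma)>P_2(\sigma)}(P_1(\sigma)-P_2(\sigma))$, I get $B=\dtv(P_1,P_2)$ exactly. Hence if $\dtv(P_1,P_2)\le\varepsilon$ then $B\le\varepsilon$, and if $\dtv(P_1,P_2)\ge\eta$ then $B\ge\eta$.

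\textbf{Step 2: bound $|A-B|$ by $\gamma/4$.} Here I would compare the two sums term by term, using Lemma~\ref{lem:allerrors}, which gives $\bigl|r(\sigma)-\tfrac{P_2(\sigma)}{P_1(\sigma)}\bigr|\le\tfrac{\gamma}{4}\cdot\tfrac{P_2(\sigma)}{P_1(\sigma)}$ under $\mathtt{Good}$. Two sources of discrepancy must be controlled: (i) the contribution from terms where $r(\sigma)$ and $P_2(\sigma)/P_1(\sigma)$ lie on opposite sides of $1$ (so one indicator fires and the other does not), and (ii) the difference in the $(1-\cdot)$ factors where both indicators agree. For (ii), on the overlap the per-term gap is $|r(\sigma)-\tfrac{P_2(\sigma)}{P_1(\sigma)}|\,P_1(\sigma)\le\tfrac{\gamma}{4}P_2(\sigma)$, and summing over all $\sigma$ gives at most $\tfrac{\gamma}{4}\sum_\sigma P_2(\sigma)=\tfrac{\gamma}{4}$. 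The expected obstacle is case (i): when the two ratios straddle $1$, the closeness bound forces both to be within $\tfrac{\gamma}{4}$ of $1$, so the single surviving term $(1-\min)$ is itself at most $\tfrac{\gamma}{4}$ of the corresponding $P_2$ or $P_1$ mass, and these ``boundary'' terms are disjoint across $\sigma$, so their total is again bounded by $\tfrac{\gamma}{4}$ times a sum of probabilities $\le 1$. Combining, $|A-B|\le\gamma/4$.

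\textbf{Conclusion.} Chaining Steps 1 and 2, in case~1 I get $A\le B+\gamma/4\le\varepsilon+\gamma/4$, and in case~2 I get $A\ge B-\gamma/4\ge\eta-\gamma/4$, which are exactly the two claimed inequalities. The main obstacle is the careful bookkeeping in case (i) of Step~2: I must argue that a term can flip an indicator only when its ratio is near $1$, and that the resulting error is absorbed into the same $\gamma/4$ budget rather than doubling it, so that the two error sources together — not separately — stay within $\gamma/4$. A cleaner route that avoids splitting into cases is to bound $A$ directly from both sides by comparing $\mathbbm{1}(r(\sigma)<1)(1-r(\sigma))$ pointwise with $\bigl(1-\tfrac{P_2(\sigma)}{P_1(\sigma)}\bigr)^+$ and noting both are $1$-Lipschitz truncations at $0$ of quantities that differ by at most $\tfrac{\gamma}{4}\tfrac{P_2(\sigma)}{P_1(\sigma)}$; the Lipschitz property of $x\mapsto(1-x)^+$ then yields the per-term bound directly, and summing against $P_1(\sigma)$ gives $\tfrac{\gamma}{4}$ in one stroke.
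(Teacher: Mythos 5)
Your proposal is correct and follows essentially the same route as the paper: identify the ideal sum with $\dtv(P_1,P_2)$, then use Lemma~\ref{lem:allerrors} together with a case split on whether the indicators $\mathbbm{1}(r(\sigma)<1)$ and $\mathbbm{1}\left(P_2(\sigma)/P_1(\sigma)<1\right)$ agree, bounding every per-term discrepancy by $\frac{\gamma}{4}P_2(\sigma)$ so that the disjoint cases together sum to at most $\gamma/4$ (the paper's sets $S_1$, $S_2$, $S_3$ are exactly your cases (ii) and (i)). Your closing observation that $x\mapsto(1-x)^+$ is $1$-Lipschitz, which collapses the case analysis into a single pointwise bound, is a cleaner way to organize the same estimate than the paper's explicit three-set decomposition, but it is a repackaging rather than a different argument.
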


\begin{proof}
 If \;$ \sum_{x } \left(P_1(x)- P_2(x)\right) = 0$, then $\frac{1}{2} \sum_{x}|P_1(x)- P_2(x)| = \underset{x: P_1(x)- P_2(x) > 0}{\sum} (P_1(x)- P_2(x))$. Using this fact we see that,
	\begin{align*}
		\dtv(P_1,P_2)  
		&=\sum_{\sigma:P_2(\sigma)< P_1(\sigma) } P_1(\sigma) - P_2(\sigma) =\sum_{\sigma:\frac{P_2(\sigma)}{ P_1(\sigma)}<1} \left(1 - \frac{P_2(\sigma)}{P_1(\sigma)}\right) P_1(\sigma)   \\	
		&=\sum_{\sigma \in \{0,1\}^n} \mathbbm{1}\left(\frac{P_2(\sigma)}{ P_1(\sigma)}<1\right) \left(1 - \frac{P_2(\sigma)}{P_1(\sigma)}\right)  P_1(\sigma) \nonumber \\
		&=A +   \underbrace{\sum_{\sigma \in \{0,1\}^n}  \mathbbm{1}\left(\frac{P_2(\sigma)}{ P_1(\sigma)}<1\right) \left(1 - \frac{P_2(\sigma)}{P_1(\sigma)}\right)P_1(\sigma)- A}_{B}    
	\end{align*}
	Thus we have that $\dtv(P_1,P_2)	- A = B$.	We now divide the set of assignments $\sigma \in \{0,1\}^n$ into three disjoint partition $S_1, S_2$ and $S_3$ as following: $S_1= \{ \sigma : \mathbbm{1}(\frac{P_2(\sigma)}{P_1(\sigma)} <1) =  \mathbbm{1}(r(\sigma)<1) \}$; $S_2= \{ \sigma : \mathbbm{1}(\frac{P_2(\sigma)}{P_1(\sigma)} <1) >  \mathbbm{1}(r(\sigma)<1) \}$;
	 $S_3= \{ \sigma : \mathbbm{1}(\frac{P_2(\sigma)}{P_1(\sigma)} <1) <  \mathbbm{1}(r(\sigma)<1) \}$. The definition implies that the indicator $\mathbbm{1}(r(\sigma)<1)$ is $0$ for all assignments in the set $S_2$, and is $1$ for all assignments in $S_3$. Similarly   $\mathbbm{1}(\frac{P_2(\sigma)}{P_1(\sigma)} <1)$ takes value $1$ and $0$ for all elements in $S_2$ and $S_3$, respectively.
	
	Now we bound the magnitude of $B$,
	\begin{align*}
		&|B| =\left| \sum_{\sigma \in \{0,1\}^n} \left[  \left(1 - \frac{P_2(\sigma)}{P_1(\sigma)}\right) \mathbbm{1}\left(\frac{P_2(\sigma)}{ P_1(\sigma)}<1\right) - \left(1 - r(\sigma)\right) 
		\mathbbm{1}\left(r(\sigma)<1\right)  \right] P_1(\sigma)\right|
	\end{align*}
	For $b_j>0$, we have that  $|\sum_j a_j b_j| \leq \sum_j |a_j|b_j$, and thus:
	\begin{align*}
|B| \leq
		  \sum_{\sigma \in \{0,1\}^n} \left| \left[  \left(1 - \frac{P_2(\sigma)}{P_1(\sigma)}\right)  \mathbbm{1}\left(\frac{P_2(\sigma)}{ P_1(\sigma)}<1\right) - \left(1 - r(\sigma)\right) 
		\mathbbm{1}\left(r(\sigma)<1\right)  \right] \right| P_1(\sigma)
		\end{align*}
	We can split the summation into three terms based on the sets in which the assignments lie. Some summands take the value $0$ in a particular set, so we don't include them in the term.
	\begin{align*}
		|B|\leq&
		  \sum_{\sigma \in S_1} \mathbbm{1}\left(\frac{P_2(\sigma)}{ P_1(\sigma)}<1\right)\left|r(\sigma) - \frac{P_2(\sigma)}{ P_1(\sigma)}\right|P_1(\sigma) + \sum_{\sigma \in S_2}\mathbbm{1}\left(\frac{P_2(\sigma)}{ P_1(\sigma)}<1\right)\left(1 - \frac{P_2(\sigma)}{P_1(\sigma)}\right)P_1(\sigma)\\
		&+ \sum_{\sigma \in S_3} \mathbbm{1}\left(r(\sigma)<1\right)\left(1-r(\sigma)\right)P_1(\sigma)
	\end{align*}	
	Since we know that  $\forall \sigma \in S_2, r(\sigma)>1$ and $\forall  \sigma \in S_3, \frac{P_2(\sigma)}{ P_1(\sigma)}>1$, we can alter the second and third terms of the inequality in the following way:
	\begin{align*}
		|B|\leq& \sum_{\sigma \in S_1} \mathbbm{1}\left(\frac{P_2(\sigma)}{ P_1(\sigma)}<1\right)\left|r(\sigma) - \frac{P_2(\sigma)}{ P_1(\sigma)}\right| P_1(\sigma)+ \sum_{\sigma \in S_2}\mathbbm{1}\left(\frac{P_2(\sigma)}{ P_1(\sigma)}<1\right)\left|r(\sigma) - \frac{P_2(\sigma)}{ P_1(\sigma)}\right| P_1(\sigma)\\
		&+ \sum_{\sigma \in S_3} \mathbbm{1}\left(r(\sigma)<1\right)\left| \frac{P_2(\sigma)}{ P_1(\sigma) }-r(\sigma)\right|P_1(\sigma) \leq \sum_{\sigma \in S_1 \cup S_2 \cup S_3} \left|r(\sigma) - \frac{P_2(\sigma)}{ P_1(\sigma)}\right| P_1(\sigma)
	\end{align*}
	Using our assumption of the event $\mathtt{Good}$ and Lemma~\ref{lem:allerrors},	$|B| \leq \sum_{\sigma \in \{0,1\}^n}\gamma/4 \cdot P_1(\sigma) \leq \gamma/4$
	Since $\dtv(P_1,P_2)	- A = B$, we get
		$|\dtv(P_1,P_2)	- A| \leq \gamma/4$. We can now deduce that if 
	$
	\dtv(P_1,P_2) \leq \varepsilon
	$, then $A \leq  \varepsilon + \gamma/4 $
	and if $
	\dtv(P_1,P_2) \geq \eta
	$, then $		A \geq  \eta - \gamma/4$.
\end{proof}	
\paragraph{Using $\teq$ to test PCs in general.}
Exact weighted model counting(WMC) is a commonly supported query on PCs. In the language of PC queries, a WMC query is known as the marginal (\mar) query. Conditional inference (\con) is another well studied PC query. Using {\con} and {\mar}, one can sample from the distribution encoded by a given PC. It is known that if a PC has the structural properties of \textit{smoothness} and \textit{decomposability}, then the {\con} and {\mar} queries can be computed tractably. For the definitions of the above terms and further details, please refer to the survey~\cite{CVV20}.  
\section{Evaluation}\label{evaluation}

To evaluate the performance of $\teq$, we implemented a prototype in Python. The prototype uses \waps\footnote{\hyperlink{https://github.com/meelgroup/WAPS}{https://github.com/meelgroup/WAPS}}~\cite{GSRM19} as a weighted sampler to sample over the input {\ddnnf} circuits. The primary objective of our experimental evaluation was to seek an answer to the following question: Is $\teq$ able to determine the closeness of a pair of probabilistic circuits by returning {\accept} if the circuits are $\varepsilon$-close and {\reject} if they are $\eta$-far? We test our tool ${\teq}$ in the following two settings: 
\begin{enumerate}
	\item[\textbf{A}.] The pair of PCs represent small randomly generated circuits and weight functions.
	\item[\textbf{B}.] The pair of PCs are from the set of publicly available benchmarks arising from sampling and counting tasks. 
\end{enumerate} 

Our experiments were conducted on a high performance compute cluster with Intel Xeon(R) E5-2690 v3@2.60GHz CPU cores. For each benchmark, we use a single core with a timeout of 7200 seconds.

\subsection{Setting A - Synthetic benchmarks}
\paragraph{Dataset} Our dataset for experiments conducted in setting \textbf{A} consisted of randomly generated 3-{\cnf}s and with random literal weights. Our dataset consisted of 3-{\cnf}s with $\{14,15,16,17,18\}$ variables. Since the circuits are small, we validate the results by computing the actual total variation distance using brute-force.

\begin{table}[h!]	\centering	\begin{tabular}{cccccc}\toprule & \multicolumn{3}{c}{$\dtv$}&  &    \\ \cmidrule(l){2-4} Benchmark&$\leq\epsilon $&$\geq \eta$&Actual&Result& Expected Result \\ \midrule 
15\_3&0.75&0.94&0.804&R&A/R\\ \midrule 
14\_2&0.8&0.9&0.764&A&A\\ \midrule 
17\_4&0.75&0.9&0.941&R&R\\ \midrule 
14\_1&0.9&0.99&0.740&A&A\\ \midrule 
18\_2&0.75&0.9&0.918&R&R\\  
		\bottomrule\end{tabular}
	 \caption{Runtime performance of $\teq$. We experiment with 375 random PCs with known $\dtv$, and out of the 375 benchmarks we display 5 in the table and the rest in the supplementary Section ~\ref{sec:extended}. In the table `A' represents {\accept} and `R' represents {\reject}. In the last column `A/R' represents that both {\accept} and {\reject} are acceptable outputs for $\teq$. }\label{tab:randcnf_table}\end{table}\raggedbottom

\paragraph{Results}
Our tests terminated with the correct result in less than 10 seconds on all the randomly generated PCs we experimented with.  We present the empirical results in Table~\ref{tab:randcnf_table}. The first column indicates the benchmark's name, the second and third indicate the parameters $\varepsilon$ and $\eta$ on which we executed $\teq$.   The fourth column indicates the actual $\dtv$ distance between the two benchmark PCs. The fifth column indicates the output of $\teq$, and the sixth indicates the expected result. The full detailed results are presented in the appendix Section~\ref{sec:extended}.

\subsection{Setting B - Real-world benchmarks}
\paragraph{Dataset} We conducted experiments on a range of publicly available benchmarks arising from sampling and counting tasks\footnote{\href{https://zenodo.org/record/3793090}{https://zenodo.org/record/3793090}}. Our dataset contained 100 {\ddnnf} circuits with weights. We have assigned random weights to literals wherever weights were not readily available. For the empirical evaluation of $\teq$, we needed pairs of weighted {\ddnnf}s with known $\dtv$ distance. To generate such a dataset, we first chose a circuit and a weight function, and then we synthesized new weight functions using the technique of \textit{one variable perturbation}, described in the appendix Section~\ref{sec:syntheticbenchmark}.

\begin{table}[h!]	\centering	\begin{tabular}{ccccc}\toprule & \multicolumn{2}{c}{$\dtv \leq \varepsilon$}&  \multicolumn{2}{c}{$\dtv \geq \eta$}   \\ \cmidrule(l){2-3}\cmidrule(l){4-5} Benchmark&Result&$\teq$(s)&Result&$\teq$(s)  \\ \midrule or-70-10-8-UC-10&A&23.2&R&22.82\\ 
		s641\_15\_7&A&33.66&R&33.51\\ 
		or-50-5-4&A&414.17&R&408.59\\ 
		ProjectService3&A&356.15&R&356.14\\ 
		s713\_15\_7&A&24.86&R&24.41\\ 
		or-100-10-2-UC-30&A&31.04&R&31.0\\ 
		s1423a\_3\_2&A&153.13&R&152.81\\ 
		s1423a\_7\_4&A&104.93&R&103.51\\ 
		or-50-5-10&A&283.05&R&282.97\\ 
		or-60-20-6-UC-20&A&363.32&R&362.8\\ 
		\bottomrule\end{tabular}
	 \caption{Runtime performance of $\teq$. We experiment with 100 PCs with known $\dtv$, and out of the 100 benchmarks we display 10 in the table and the rest in the appendix ~\ref{sec:extended}. In the table `A' represents {\accept} and `R' represents {\reject}. The value of the closeness parameter is  $\varepsilon = 0.01$ and the farness parameter is $\eta = 0.2$.}\label{tab:comparision}\end{table}\raggedbottom

\paragraph{Results}
We set the closeness parameter $\varepsilon$, farness parameter $\eta$ and confidence $\delta$ for $\teq$ to be $0.01,0.2$ and $0.01$, respectively.  The chosen parameters imply that if the input pair of probabilistic circuits are $\leq 0.01$ close in $\dtv$, then $\teq$ returns {\accept} with probability atleast $0.99$, otherwise if the circuits are $\geq 0.2$ far in $\dtv$, the algorithm returns {\reject} with probability at least $0.99$. The number of samples required for $\teq$ (indicated by the variable $m$ as on line~\ref{line:m2} of Algorithm~\ref{alg:epsapproximateprobeq}) depends only on $\varepsilon, \eta,\delta$ and for the values we have chosen, we find that we require $m=294$ samples. 

Our tests terminated with the correct result in less than 3600 seconds on all the PCs we experimented with.  We present the empirical results in Table~\ref{tab:comparision}. The first column indicates the benchmark's name, the second and third indicate the result and runtime of $\teq$  when presented with a pair of $\varepsilon$-close PCs as input. Similarly, the fourth and fifth columns indicate the result and observed runtime of $\teq$ when the input PCs are $\eta$-far . The full set of results are presented in the supplementary Section~\ref{sec:extended}.

\section{A characterization of the complexity of  testing }\label{sec:characterization}
In this section, we characterize PCs according to the complexity of closeness and equivalence testing. We present the characterization in Table~\ref{tab:tolerant}.
The results presented in the table can be separated into (1) hardness results, and (2) upper bounds. The hardness results, presented in Section~\ref{sec:hardness}, are largely derived from known complexity-theoretic results. The upper bounds, presented in Section~\ref{sec:upperbound}, are derived from a combination of established results, our algorithm $\teq$ and the exact equivalence test of~\citet{DH02}(presented in supplementary Section~\ref{sec:exactequiv} for completeness).

\subsection{Upper bounds}\label{sec:upperbound}
In Table~\ref{tab:tolerant} we label the pair of classes of PCs that admit a poly-time closeness and equivalence test with green symbols $\textcolor{green}{C}$ and $\textcolor{green}{E}$ respectively. \citet{DH02} provided an equivalence test for {\ddnnf} s. From Theorem~\ref{thm:tolapprox}, we know that PCs that supports the $\awct$ and $\asamp$ queries in poly-time must also admit a poly-time approximate equivalence test. A weighted model counting algorithms for {\ddnnf}s was first provided by~\citet{D01a}, and a weighted sampler was provided by~\cite{GSRM19}. ~\citet{ACJR20} provided the first approximate counting and uniform sampling algorithm for {\sdnnf}s. Using the following lemma, we show that with the use of chain formulas, the uniform sampling and counting algorithms extend to log-linear {\sdnnf} distributions as well.
\begin{restatable}{lem}{approxcounting}\label{lem:sdnnf}
	Given a {\sdnnf} formula $\varphi$ (with a v-tree ${T}$), and a weight function $\mathtt{w}$, $\asamp(\varphi,\mathtt{w})$ requires polynomial time in the size of $\varphi$.    
\end{restatable}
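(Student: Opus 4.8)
The plan is to reduce approximate \emph{weighted} sampling over the {\sdnnf} $\varphi$ to approximate \emph{uniform} sampling over a related {\sdnnf} $\varphi'$, and then to invoke the uniform-sampling FPAUS for {\sdnnf}s due to \citet{ACJR20}. The bridge between the weighted and unweighted settings is the chain-formula encoding of \citet{CFMV15}, which replaces each literal weight by a gadget over fresh Boolean variables whose uniform measure reproduces the desired weight.

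First I would recall the chain-formula construction. Since $\mathtt{w}$ is literal-weighted, each $x \in X$ carries a weight $\mathtt{w}(x) \in \mathbb{Q}\cap(0,1)$ that is representable in $\mathrm{poly}(|\varphi|)$ bits. For each $x$ I introduce a private block of fresh variables $Y_x$ together with a small formula $\chi_x$ over $\{x\}\cup Y_x$ such that, among the $2^{|Y_x|}$ assignments to $Y_x$, the fraction consistent with $x=1$ equals $\mathtt{w}(x)$ (and the complementary fraction corresponds to $x=0$). Conjoining all the gadgets with $\varphi$ yields a formula $\varphi'$ over $X \cup \bigcup_x Y_x$ whose satisfying assignments, restricted to $X$, are exactly $R_\varphi$, and for which the number of fresh completions of any $\sigma \in R_\varphi$ is proportional to $\mathtt{w}(\sigma)$. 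Hence the uniform distribution on $R_{\varphi'}$, pushed forward to $X$, is precisely $P(\varphi,\mathtt{w})$.

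Second, I would argue that $\varphi'$ is again an {\sdnnf} of size polynomial in $|\varphi|$, admitting a v-tree $T'$ derived from $T$. Because each $\chi_x$ touches only $x$ and its private variables $Y_x$, I can graft a small subtree over $\{x\}\cup Y_x$ at the leaf of $T$ labelled $x$; decomposability and structuredness are preserved, since the gadgets share no variables with one another or with the rest of $\varphi$. The gadgets are read-once chains, so each has a linear-size {\obdd} and therefore a linear-size {\sdnnf} respecting the grafted subtree; the total blowup is bounded by $\sum_x |Y_x|$, which is polynomial in the bit-length of the weights, and hence in $|\varphi|$. I then invoke the uniform sampler of \citet{ACJR20} on $(\varphi', T')$ to draw an approximately uniform $\tau \in R_{\varphi'}$ and output the restriction $\tau|_X$. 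Propagating the guarantees is routine: an $(\alpha,\beta)$-almost-uniform sample over $R_{\varphi'}$ projects to an $(\alpha,\beta)$-approximate $\mathtt{w}$-weighted sample over $R_\varphi$, because summing a two-sided multiplicative $(1+\alpha)$ bound over the fresh completions of a fixed $\sigma$ preserves the bound; choosing the sampler's parameters appropriately thus meets Definition~\ref{def:asamp}, and the running time is polynomial since $|\varphi'|$ is polynomial and the {\sdnnf} sampler is an FPAUS.

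The main obstacle I anticipate is twofold: (i) guaranteeing that the grafted gadgets keep the circuit \emph{structured}, i.e.\ that $T'$ still certifies the decomposition at every $\wedge$-node, rather than merely keeping $\varphi'$ decomposable; and (ii) handling weights whose denominators are not powers of two. For (ii) I would either pass to a common denominator across all weights or approximate each $\mathtt{w}(x)$ by a dyadic rational whose precision is tied to the tolerance $\alpha$, and then track how this rounding perturbs the pushed-forward distribution so that the emitted sample still satisfies the multiplicative bound of Definition~\ref{def:asamp}. Controlling this size--precision trade-off, while keeping the total number of fresh variables polynomial, is the crux of the argument.
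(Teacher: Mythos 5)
Your proposal follows essentially the same route as the paper's proof: both reduce weighted sampling on the {\sdnnf} to uniform sampling via the chain-formula construction of \citet{CFMV15}, argue that the gadgets preserve decomposability and structuredness because they introduce disjoint fresh variables, observe that the uniform distribution on the augmented formula projects down to $P(\varphi,\mathtt{w})$, and invoke the FPAUS of \citet{ACJR20}. The only divergence is your anticipated obstacle (ii): the paper simply assumes the weights are dyadic (of the form $d/2^p$) at the outset rather than handling general rationals, so your sketch of a common-denominator or rounding argument goes slightly beyond what the paper actually proves.
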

The proof is provided in the supplementary Section~\ref{propositions}.

\subsection{Hardness}\label{sec:hardness}
In Table~\ref{tab:tolerant}, we claim that the pairs of classes of PCs labeled with symbols $\textcolor{red}{C}$ and $\textcolor{red}{E}$ , cannot be tested in poly-time for closeness equivalence, respectively. Our claim assumes that the polynomial hierarchy (PH) does not collapse. To prove the hardness of testing the labeled pairs, we combine previously known facts about PCs  and a few new arguments. Summarizing for brevity,
\begin{itemize}
	\item We start off by observing that PC families are in a hierarchy, with {\cnf} $\subseteq $ {\nnf} and {\dnf} $\subseteq$ {\sdnnf} $\subseteq$ {\dnnf} ~\cite{D02}.
	
	\item We then reduce the problem of satisfiability testing of {\cnf}s (NP-hard) and validity testing of {\dnf}s (co-NP-hard) into the problem of equivalence and closeness testing of PCs, in Propositions~\ref{prop:cnfexact}, \ref{prop:cnfapprox} and \ref{prop:dnfexact}. These propositions and their proofs can be found in the supplementary Section~\ref{propositions}.
	
	\item We then connect the existence of poly-time algorithms for equivalence to the collapse of PH via a complexity result due to~\citet{KL80}.
\end{itemize}
\begin{table*}[]
\centering
\begin{tabular}{|c|ccccc|} 
 \hline
 &	\nnf  & \primeimp &\dnnf& \sdnnf&\ddnnf\\
 \hline	
 \nnf&  \textcolor{red}{$EC$}&&&&\\
 \hline
 \primeimp &\textcolor{red}{$EC$}&$UU$&&&\\
 \hline
 \dnnf &\textcolor{red}{$EC$} &\textcolor{red}{$E$}$U$& \textcolor{red}{$E$}$U$&&\\  
 \hline
 \sdnnf&\textcolor{red}{$EC$}&\textcolor{red}{$E$}$U$&\textcolor{red}{$E$}$U$&
 \textcolor{red}{$E$}\textcolor{green}{$C$} &\\  
 \hline
\ddnnf &\textcolor{red}{$EC$}&$UU$& \textcolor{red}{$E$}$U$ & \textcolor{red}{$E$}\textcolor{green}{$C$}& \textcolor{green}{$EC$}\\
\hline 	 
\end{tabular}
\caption{\label{tab:tolerant} Summary of results. \textcolor{green}{C} (resp. \textcolor{green}{E}) indicates that a poly-time closeness (resp. equivalence) test exists. \textcolor{red}{C} (resp. \textcolor{red}{E}) indicates that a poly-time closeness (equivalence) test exists only if PH collapses. `$U$' indicates that the existence of a poly-time test is not known. The table is best viewed in color.}
\end{table*}
\raggedbottom 
The NP-hardness of deciding the equivalence of pairs of {\dnnf}s and pairs of {\sdnnf}s was first shown by~\citet{PD08}. We recast their proofs in the language of distribution testing for the sake of completeness in the supplementary Section~\ref{propositions}.

\section{Conclusion and future work}\label{conclusion}
In this paper, we studied the problem of closeness testing of PCs. Before our work, poly-time algorithms were known only for the special case of equivalence testing of PCs; and, no poly-time closeness test was known for any PC. We provided the first such test, called $\teq$, that used ideas from the field of distribution testing to design a novel algorithm for testing the closeness of PCs. We then implemented a prototype for $\teq$, and tested it on publicly available benchmarks to determine the runtime performance. Experimental results demonstrate the effectiveness of $\teq$ in practice.

We also characterized PCs with respect to the complexity of deciding equivalence and closeness. We combined known hardness results, reductions, and our proposed algorithm $\teq$ to classify pairs of PCs according to closeness and equivalence testing complexity. Since the characterization is incomplete, as seen in Table~\ref{tab:tolerant}, there are questions left open regarding the existence of tests for certain PCs, which we leave for future work.
\section*{Broader Impact}\label{sec:impact}
Recent advances in probabilistic modeling techniques have led to increased adoption of the said techniques in safety-critical domains, thus creating a need for appropriate verification and testing methodologies. This paper seeks to take a step in this direction and focuses on testing properties of probabilistic models likely to find use in safety-critical domains. Since our guarantees are probabilistic,  practical adoption of such techniques still requires careful design to handle failures.
\begin{ack}
	We are grateful to the anonymous reviewers of UAI 2021 and NeurIPS 2021 for their constructive feedback that greatly improved the paper. We would also like to thank Suwei Yang and Lawqueen Kanesh for their useful comments on the earlier drafts of the paper. This work was supported in part by National Research Foundation Singapore under its NRF Fellowship Programme[NRF-NRFFAI1-2019-0004 ] and AI Singapore Programme [AISG-RP-2018-005],  and NUS ODPRT Grant [R-252-000-685-13]. The computational work for this article was performed on resources of the National Supercomputing Centre, Singapore (\href{https://www.nscc.sg}{https://www.nscc.sg}).
\end{ack}
\newpage
\bibliographystyle{plainnat}
\bibliography{neurips_2021}
\appendix
\clearpage
\section{Proofs omitted from the paper}\label{appendix}
\subsection{A test for equivalence}\label{sec:exactequiv}
For the sake of completeness we recast the {\ddnnf} circuit equivalence test of~\citet{DH02} into an equivalence test for log-linear probability distributions.
\begin{algorithm}[]
	\caption{$\peq(\varphi_1,\mathtt{w_1}, \varphi_2,\mathtt{w_2}, \delta)$}
	\label{alg:exactprobeq}
	\begin{algorithmic}[1]
		\STATE $m \gets \lceil n/  \delta\rceil $ 
		\STATE $  \theta \sim [m]^n $
		\IF{$\pi(\varphi_1,\mathtt{w_1})(\theta) = \pi(\varphi_2, \mathtt{w_2})(\theta)$}\alglinelabel{line:compare}
		\STATE \textbf{Return} \accept
		\ELSE \STATE \textbf{Return} \reject 
		\ENDIF
	\end{algorithmic}
\end{algorithm}
\raggedbottom
\paragraph{The algorithm:} The pseudocode for $\peq$ is shown in Algorithm~\ref{alg:exactprobeq}. $\peq$ takes as input two satisfiable circuits $\varphi_1, \varphi_2$ defined over $n$ Boolean variables, a pair of weight functions $\mathtt{w_1}, \mathtt{w_2}$ and a tolerance parameter $\delta \in (0,1)$. Recall that a circuit $\varphi$ and a weight function $\mathtt{w}$ together define the probability distribution $P(\varphi, \mathtt{w})$. $\peq$ returns {\accept} with confidence 1 if the two probability distributions $P(\varphi_1, \mathtt{w_1})$ and $P(\varphi_2, \mathtt{w_2})$ are equivalent, i.e. $\dtv(P(\varphi_1, \mathtt{w_1}), P(\varphi_2, \mathtt{w_2})) = 0$. If  $\dtv(P(\varphi_1, \mathtt{w_1}), P(\varphi_2, \mathtt{w_2})) > 0$, then it returns {\reject} with confidence at least $1-\delta$. 

The algorithm starts by drawing a uniform random assignment $\theta$ from $[m]^n$, where $m=\lceil n/\delta\rceil$.  Using the procedure given in Proposition~\ref{prop:time} (in Section~\ref{appendix:exact}), $\peq$ computes the values $\pi(\varphi_1, \mathtt{w_1})(\theta)$ and $\pi(\varphi_2, \mathtt{w_2})(\theta)$, where $\pi(\varphi, \mathtt{w})$ is the network polynomial~\cite{D03}. $\pi(\varphi, \mathtt{w})$ defined as:
\begin{align*}
	\pi(\varphi, \mathtt{w}) = \sum_{\sigma \in R_\varphi}  \frac{\mathtt{w}(\sigma)}{\mathtt{w}(\varphi)} \left( \prod_{x_i \models \sigma} x_i \prod_{\lnot x_j \models \sigma}(1 - x_j)\right)
\end{align*}

The two values are then compared on line~\ref{line:compare}, and if they are equal the algorithm returns {\accept} and otherwise  returns {\reject}. The central idea of the test is that whenever the two distributions $P(\varphi_1, \mathtt{w_1})$ and $P(\varphi_2, \mathtt{w_2})$ are equivalent, the polynomials $\pi(\varphi_1, \mathtt{w_1})$ and $\pi(\varphi_2, \mathtt{w_2})$ are also equivalent, however when they are not equivalent, the polynomials disagree on atleast $1-\delta$ fraction of assignments from the set $[m]^n$.

We formally claim and prove the correctness of $\peq$ in Lemma~\ref{lem:exact} in the Section~\ref{appendix:exact}.
\subsection{An analysis for Algorithm~\ref{alg:exactprobeq}}\label{appendix:exact}
In this section, we present the theoretical analysis of Algorithm \ref{alg:exactprobeq} ($\peq$) and the proof of the following lemma.
\begin{restatable}{lem}{exact}\label{lem:exact}
	Given two satisfiable probabilistic circuits $\varphi_1, \varphi_2$ and weight functions $\mathtt{w_1},\mathtt{w_2}$, along with confidence parameter $\delta \in (0,1)$.
	\begin{enumerate}
		\item[A.] If $\dtv(P(\varphi_1,\mathtt{w_1}), P(\varphi_2,\mathtt{w_2})) = 0$, then $\peq(\varphi_1,\mathtt{w_1}, \varphi_2,\mathtt{w_2},  \delta)$ returns {\accept} with probability 1.
		\item[B.] If $\dtv(P(\varphi_1,\mathtt{w_1}), P(\varphi_2,\mathtt{w_2})) > 0$, then $\peq(\varphi_1,\mathtt{w_1}, \varphi_2,\mathtt{w_2},\delta)$ returns {\reject} with probability at least $ (1-\delta)$.
	\end{enumerate} 
\end{restatable}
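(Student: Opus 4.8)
\textbf{Proof proposal for Lemma~\ref{lem:exact}.}

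The plan is to analyze the two cases separately, exploiting the algebraic identity that equivalence of the distributions $P(\varphi_1,\mathtt{w_1})$ and $P(\varphi_2,\mathtt{w_2})$ is equivalent to equality of their network polynomials $\pi(\varphi_1,\mathtt{w_1})$ and $\pi(\varphi_2,\mathtt{w_2})$ as formal polynomials over the $n$ variables. First I would establish this equivalence: since $\pi(\varphi,\mathtt{w})$ is a multilinear polynomial whose monomial corresponding to an assignment $\sigma$ has coefficient exactly $P(\varphi,\mathtt{w})(\sigma)$, two network polynomials are identical as polynomials if and only if the underlying p.m.f.s agree on every point, i.e.\ if and only if the $\dtv$ distance is zero. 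This reduces the probabilistic question to a question about polynomial identity testing.

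For part A, if $\dtv(P(\varphi_1,\mathtt{w_1}),P(\varphi_2,\mathtt{w_2})) = 0$ then $\pi(\varphi_1,\mathtt{w_1})$ and $\pi(\varphi_2,\mathtt{w_2})$ are the same polynomial, so they evaluate to the same value at \emph{every} point $\theta \in [m]^n$, in particular at the randomly drawn $\theta$. Hence the comparison on line~\ref{line:compare} always succeeds and $\peq$ returns \accept{} with probability $1$; this case is immediate once the reduction is in place.

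For part B, if the distributions are not equivalent then $\pi(\varphi_1,\mathtt{w_1}) - \pi(\varphi_2,\mathtt{w_2})$ is a nonzero multilinear polynomial of total degree at most $n$ in $n$ variables. I would then invoke the Schwartz--Zippel lemma: a nonzero polynomial of total degree $d$ over a finite set $S^n$ (here $S = [m]$ with $|S| = m$) vanishes on at most a $d/|S|$ fraction of the points. With $d \le n$ and $m = \lceil n/\delta\rceil \ge n/\delta$, the probability that a uniformly random $\theta \in [m]^n$ is a root is at most $n/m \le \delta$. Thus the two evaluations disagree, and $\peq$ correctly returns \reject, with probability at least $1-\delta$.

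The main obstacle is not the Schwartz--Zippel step, which is routine, but rather justifying that the evaluations $\pi(\varphi_i,\mathtt{w_i})(\theta)$ can be computed exactly and efficiently, and that the comparison on line~\ref{line:compare} faithfully tests polynomial equality at $\theta$ without roundoff or representation issues. I would address this by appealing to Proposition~\ref{prop:time} (in Section~\ref{appendix:exact}), which supplies the exact evaluation procedure over the rationals, so that the equality test is performed on exact values. With that in hand, the correctness of both branches follows directly from the polynomial-identity reduction and the degree bound above.
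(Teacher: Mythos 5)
Your proposal is correct and follows essentially the same route as the paper: part A is the same observation that equivalent distributions yield identical network polynomials, and part B reduces to detecting a nonzero multilinear polynomial of total degree at most $n$ by evaluating it at a uniformly random point of $[m]^n$. The only difference is that where you invoke the Schwartz--Zippel lemma as a black box (giving failure probability at most $n/m \le \delta$), the paper proves the needed statement from scratch as Lemma~\ref{lem:noteq}, via an induction on the number of variables that is in effect the standard Schwartz--Zippel argument specialized to multilinear polynomials and yields the marginally sharper agreement bound $1-(1-1/m)^n$; both bounds meet the required $1-\delta$ threshold.
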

$\peq$ returns {\accept} if  $\pi(\varphi_1, \mathtt{w_1})(\sigma) = \pi(\varphi_2, \mathtt{w_2})(\sigma)$. Since $P(\varphi_1, \mathtt{w_1}) \equiv P(\varphi_2, \mathtt{w_2}) \rightarrow \pi(\varphi_1, \mathtt{w_1}) \equiv \pi(\varphi_2, \mathtt{w_2})$, it follows that $\peq$  always returns {\accept} for two equivalent probabilistic distributions.

For the proof of Lemma~\ref{lem:exact}(B) we will first define some notation, and then we show (in Lemma~\ref{lem:noteq}) that a random assignment over $[m]^n$ is likely to be a witness for non-equivalence with probability $>1-\delta$. The proof immediately follows as we know that $\peq$ returns {\reject} if $\pi(\varphi_1,\mathtt{w_1})(\sigma) \not = \pi(\varphi_2, \mathtt{w_2})(\sigma)$. 

\begin{definition}
	$\pi|_{x_i = 1}(\varphi,\mathtt{w})$ is a polynomial over $n-1$ variables, obtained by setting the variable $x_i$ to 1. Similarly $\pi|_{x_i = 0}(\varphi,\mathtt{w})$ is  obtained by setting the variable $x_i$ to 0, thus:  
	\begin{align*}
		\pi(\varphi,\mathtt{w}) = (1 - x_i) \pi|_{x_i = 0}(\varphi,\mathtt{w}) + x_i \pi|_{x_i = 1}(\varphi,\mathtt{w})
	\end{align*}
\end{definition}

From the definition, we can immediately infer the following proposition.
\begin{proposition}\label{prop:split}
	If $\pi(\varphi_1, \mathtt{w_1}) \not \equiv \pi(\varphi_2, \mathtt{w_2})$ then  for all $x_i$, at least one of the following must be true:
	\begin{itemize}
		\item $\pi|_{x_i = 1}(\varphi_1, \mathtt{w_1}) \not = \pi|_{x_i = 1}(\varphi_2, \mathtt{w_2})$ 
		\item $\pi|_{x_i = 0}(\varphi_1, \mathtt{w_1}) \not = \pi|_{x_i = 0}(\varphi_2, \mathtt{w_2})$
	\end{itemize}
\end{proposition}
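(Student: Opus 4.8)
The plan is to prove the contrapositive of Proposition~\ref{prop:split}, reasoning separately for each fixed variable $x_i$. For a given $x_i$, I would assume that both restrictions agree, i.e. $\pi|_{x_i = 1}(\varphi_1, \mathtt{w_1}) = \pi|_{x_i = 1}(\varphi_2, \mathtt{w_2})$ and $\pi|_{x_i = 0}(\varphi_1, \mathtt{w_1}) = \pi|_{x_i = 0}(\varphi_2, \mathtt{w_2})$, and show that this forces $\pi(\varphi_1, \mathtt{w_1}) \equiv \pi(\varphi_2, \mathtt{w_2})$. Negating this implication gives exactly the claim: if the full polynomials differ, then for that $x_i$ at least one restriction must differ. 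Since the argument is uniform in $i$, the conclusion holds for all $x_i$ simultaneously.

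The single ingredient I would invoke is the decomposition stated in the preceding definition, namely $\pi(\varphi, \mathtt{w}) = (1 - x_i)\,\pi|_{x_i = 0}(\varphi, \mathtt{w}) + x_i\,\pi|_{x_i = 1}(\varphi, \mathtt{w})$. Applying this identity to both $(\varphi_1, \mathtt{w_1})$ and $(\varphi_2, \mathtt{w_2})$ and then substituting the assumed equalities of the two restrictions, the two affine combinations become term-for-term identical, so $\pi(\varphi_1, \mathtt{w_1})$ and $\pi(\varphi_2, \mathtt{w_2})$ coincide as polynomials. This substitution step is the entire content of the contrapositive.

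There is essentially no genuine obstacle; the only point worth making explicit is why the decomposition is legitimate and why it is enough. The network polynomial is multilinear, so each variable $x_i$ occurs with degree at most one, and every such polynomial therefore admits a representation as an affine combination $(1 - x_i)\,g + x_i\,h$ with $g, h$ polynomials over the remaining $n-1$ variables; these are precisely $\pi|_{x_i = 0}$ and $\pi|_{x_i = 1}$. Because the coefficient functions $1 - x_i$ and $x_i$ pin down $g$ and $h$ uniquely, agreement of both restrictions is equivalent to agreement of the full polynomials, which closes the argument. I would keep the write-up to a couple of lines, since the result is an immediate corollary of the decomposition rather than a standalone computation.
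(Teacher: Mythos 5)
Your proof is correct and matches the paper's reasoning: the paper also treats Proposition~\ref{prop:split} as an immediate consequence of the decomposition $\pi(\varphi,\mathtt{w}) = (1 - x_i)\,\pi|_{x_i = 0}(\varphi,\mathtt{w}) + x_i\,\pi|_{x_i = 1}(\varphi,\mathtt{w})$, which is exactly the contrapositive substitution you spell out. Your extra remarks on multilinearity and uniqueness are harmless but unnecessary, since the restrictions are defined directly by substitution and the decomposition identity alone closes the argument.
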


For the proofs in this section, we will use the following notation. For a  circuit $\varphi$  defined over the variables $\{x_1,\ldots, x_n\}$, we define a polynomial $P(\varphi, \mathtt{w}): \{0,1\}^n \rightarrow [0,1]$:
\begin{align*}
	P(\varphi, \mathtt{w}) = \sum_{\sigma \in R_\varphi}  \frac{\mathtt{w}(\sigma)}{\mathtt{w}(\varphi)} \left( \prod_{x_i \models \sigma} x_i \prod_{\lnot x_j \models \sigma}(1 - x_j)\right)
\end{align*}

We define another polynomial $\pi(\varphi, \mathtt{w})$ which is  $P(\varphi, \mathtt{w})$ but defined from $[m]^n \rightarrow \mathbb{Q}$ where $[m] = \{1\ldots,m\}$.  

To show that the polynomial $\pi(\varphi, \mathtt{w})$ can be computed in time polynomial in the size of the representation, we will adapt the procedure given by~\cite{DH02}.
\begin{proposition}\label{prop:time}
	Let $\varphi$ be a circuit over the set $X=\{x_1, \ldots, x_n\}$ of $n$ variables , that admits poly-time WMC. Let $\mathtt{w}:X\rightarrow \mathbb{Q}^+$ be a weight function and let $\theta \in [m]^n$ be an assignment to the variables in $X$ and $\theta(x)$ be the assignment to variable $x \in X$ in $\theta$. For each node $\eta$ in the circuit, define a function ${S}()$ recursively as follows:
	\begin{itemize}
		\item ${S}(\eta) = \sum_i {S}(n_i)$, where $\eta$ is an or-node with children $n_i$.
		\item ${S}(\eta) = \prod_i {S}(n_i)$, where $\eta$ is an and-node with children $n_i$.
		\item ${S}(\eta) = \begin{cases}
			0,&\text{if }\eta \text{ is a leaf node false}\\
			1,&\text{if } \eta \text{ is a leaf node true}\\
			\mathtt{w}(x)\theta(x),&\text{if } \eta \text{ is a leaf node } x \in X\\
			(1-\mathtt{w}(x))(1-\theta(x)),&\text{if } \eta \text{ is a leaf node } \lnot x, x \in X\\
		\end{cases}$
		\item $\pi(\varphi,\mathtt{w}) = {S}(\eta)/\mathtt{w}(\varphi) $, \text{ where $\eta$ is the root node}
	\end{itemize}
	We can compute the quantity $\mathtt{w}(\varphi) $ in linear time due to our assumption of poly-time WMC, hence we can find $\pi(\varphi,\mathtt{w})(\theta)$ in time linear in the size of the {\ddnnf}.
\end{proposition}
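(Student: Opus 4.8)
The plan is to show that the recursively defined quantity $S$ computes the \emph{un-normalized} network polynomial, i.e.\ that $S(\eta_0) = \mathtt{w}(\varphi)\cdot \pi(\varphi,\mathtt{w})(\theta)$ where $\eta_0$ denotes the root, after which dividing by $\mathtt{w}(\varphi)$ returns $\pi(\varphi,\mathtt{w})(\theta)$ exactly. To prove this I would strengthen the statement into a claim that can be pushed through the DAG by structural induction: for every node $\eta$, writing $V_\eta$ for the set of variables occurring in the subcircuit rooted at $\eta$ and $R_\eta$ for its satisfying assignments over $V_\eta$,
\begin{align*}
	S(\eta) = \sum_{\sigma \in R_\eta} \Big(\prod_{x \in V_\eta} \mathtt{w}_\sigma(x)\Big)\Big(\prod_{x_i \models \sigma}\theta(x_i)\prod_{\lnot x_j \models \sigma}(1-\theta(x_j))\Big),
\end{align*}
where $\mathtt{w}_\sigma(x) = \mathtt{w}(x)$ if $x$ is set to $1$ in $\sigma$ and $1-\mathtt{w}(x)$ otherwise. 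Instantiating this at the root recovers the numerator in the definition of $\pi$, since over $V_{\eta_0}=X$ the factor $\prod_{x\in X}\mathtt{w}_\sigma(x)$ collapses to $\mathtt{w}(\sigma)$ by the literal-weighted definition.

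For the induction, the base case is a direct check: at a leaf labeled $x$ (resp.\ $\lnot x$) the single surviving model contributes exactly $\mathtt{w}(x)\theta(x)$ (resp.\ $(1-\mathtt{w}(x))(1-\theta(x))$), matching the definition of $S$, and the constants $0,1$ handle the false/true leaves. The and-node step is where \emph{decomposability} enters: since the children $n_i$ of an and-node have pairwise disjoint scopes $V_{n_i}$, the models in $R_\eta$ are precisely the Cartesian product of the children's models, and each summand's weight- and $\theta$-factors split along this partition; hence the sum of products over $R_\eta$ factorizes into $\prod_i S(n_i)$, matching the product rule. The or-node step is where \emph{determinism} enters: the children's model sets are pairwise disjoint and their union is $R_\eta$, so the sum over $R_\eta$ breaks additively as $\sum_i S(n_i)$.

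The subtle point — and the step I expect to be the main obstacle — is aligning scopes in the or-node case. The inductive claim is stated relative to $V_\eta$, so for $\sum_i S(n_i)$ to equal the sum over $R_\eta$ I need each child to range over the same variable set $V_\eta$, which is exactly \emph{smoothness}; without it a child missing a variable $x$ would silently omit the factor $\mathtt{w}(x)\theta(x)+(1-\mathtt{w}(x))(1-\theta(x))$, which is not identically $1$, and the identity would break. I would therefore assume the input \ddnnf{} is smooth (standard, and achievable by a poly-time smoothing pass) or carry the smoothing explicitly through the argument.

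Once correctness is in place, the running-time bound is routine. The value $S(\eta_0)$ is obtained by a single bottom-up traversal of the circuit DAG performing one arithmetic operation per leaf and per edge, hence in $O(|\varphi|)$ time; and by the assumption of poly-time (indeed linear-time, for the \ddnnf{} fragment) weighted model counting, the normalizer $\mathtt{w}(\varphi)$ is available in linear time. Therefore $\pi(\varphi,\mathtt{w})(\theta) = S(\eta_0)/\mathtt{w}(\varphi)$ is computed in time linear in $|\varphi|$, as claimed.
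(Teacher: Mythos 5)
Your proposal is correct, and it in fact proves more than the paper does: the paper never gives an argument for this proposition at all --- it simply states the bottom-up evaluation procedure, attributes it to the cited equivalence-testing work of Darwiche and Huang, and asserts the linear time bound, leaving the correctness of the recursion implicit. Your structural induction (strengthening the claim to every node $\eta$, namely that $S(\eta)$ equals the weight-and-$\theta$ sum over the models of the subcircuit with respect to its own scope $V_\eta$) is exactly the argument that underlies the procedure, and you isolate precisely where each structural property is used: decomposability turns the sum over the Cartesian product of children's models into the product rule at and-nodes, and determinism makes the children's model sets disjoint so that the or-node sum counts each model once. The most valuable part of your write-up is the point the paper silently elides: the or-node step is sound only if all children of an or-node share the same scope, i.e.\ the circuit is \emph{smooth}, since for $\theta \in [m]^n$ the factor $\mathtt{w}(x)\theta(x) + (1-\mathtt{w}(x))(1-\theta(x))$ contributed by a missing variable is not identically $1$; the paper's statement inherits this hypothesis from the smooth {\ddnnf} setting of the cited work without saying so. Your fix --- assume smoothness or apply a polynomial-time smoothing pass --- is the standard remedy; the only caveat worth recording is that smoothing can inflate the circuit by a factor of up to $n$, so the claimed bound should be read as linear in the size of the (smoothed) circuit rather than of the raw input.
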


\begin{restatable}{lem}{noteq}\label{lem:noteq}
	For a random assignment $\sigma \sim [m]^n$, $\Pr[\pi(\varphi_1,{\mathtt{w_1}})(\sigma) \not = \pi(\varphi_2,\mathtt{w_2})(\sigma) \mid P(\varphi_1,\mathtt{w_1}) \not \equiv P(\varphi_2,\mathtt{w_2})  ] >1 - \delta $
\end{restatable}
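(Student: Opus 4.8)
The plan is to recognize Lemma~\ref{lem:noteq} as a structured (multilinear) instance of the Schwartz--Zippel lemma, and to prove the sharp bound $\Pr_{\sigma \sim [m]^n}[Q(\sigma) = 0] \le n/m$ by induction on the number of variables $n$, where $Q := \pi(\varphi_1, \mathtt{w_1}) - \pi(\varphi_2, \mathtt{w_2})$. First I would reduce the conditioning event to a statement about $Q$: each $\pi(\varphi_i, \mathtt{w_i})$ is multilinear (degree at most one in every variable), and a multilinear polynomial is uniquely determined by its restriction to the Boolean cube $\{0,1\}^n$. Hence $P(\varphi_1, \mathtt{w_1}) \not\equiv P(\varphi_2, \mathtt{w_2})$, which asserts disagreement on $\{0,1\}^n$, is equivalent to $Q$ being a nonzero formal polynomial. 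The lemma then becomes the assertion that a nonzero multilinear polynomial over $n$ variables vanishes at a uniformly random point of $[m]^n$ with probability at most $n/m$; since $m = \lceil n/\delta \rceil \ge n/\delta$ we get $n/m \le \delta$, i.e. $\Pr[\pi(\varphi_1,\mathtt{w_1})(\sigma) \neq \pi(\varphi_2,\mathtt{w_2})(\sigma)] \ge 1 - \delta$.

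For the induction, the base case $n = 1$ is immediate: $Q$ is a nonzero linear polynomial in one variable and so has at most one root, giving $\Pr[Q(\sigma) = 0] \le 1/m$. For the inductive step I would use exactly the decomposition from the Definition preceding Proposition~\ref{prop:split}, writing $Q = (1 - x_n) Q_0 + x_n Q_1$ with $Q_0 = Q|_{x_n = 0}$ and $Q_1 = Q|_{x_n = 1}$, both polynomials over $x_1, \ldots, x_{n-1}$. By Proposition~\ref{prop:split}, at least one of $Q_0, Q_1$ is nonzero. Conditioning on the sampled value $x_n = c \in [m]$, the restriction is $Q|_{x_n = c} = Q_0 + c(Q_1 - Q_0)$, and I would apply the law of total probability over the $m$ equally likely values of $c$.

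The crux, and the step I expect to be the main obstacle, is bounding the number of values $c \in [m]$ for which $Q|_{x_n = c}$ is identically zero as a polynomial over the remaining variables. If $Q_1 = Q_0$, then (as $Q_0$ must then be nonzero) $Q|_{x_n = c} = Q_0 \neq 0$ for every $c$; if $Q_1 \neq Q_0$, then some monomial has differing coefficients in $Q_0$ and $Q_1$, and for that monomial the corresponding coefficient of $Q|_{x_n = c}$ is an affine function of $c$ with at most one root, so at most one value $c^*$ can make the entire restriction vanish. Thus for at most one value of $c$ the conditional vanishing probability is $1$, while for the remaining $\ge m - 1$ values $Q|_{x_n = c}$ is a nonzero $(n-1)$-variable polynomial to which the induction hypothesis applies, bounding its vanishing probability by $(n-1)/m$. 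Combining, $\Pr[Q(\sigma) = 0] \le \frac{1}{m} + \frac{m-1}{m}\cdot\frac{n-1}{m} = \frac{n}{m} - \frac{n-1}{m^2} \le \frac{n}{m}$, which completes the induction and in fact yields the strict inequality $< n/m \le \delta$ for $n \ge 2$, matching the strict bound in the statement. The only real care needed is to keep the distinction between formal polynomial identity and pointwise vanishing, which is precisely what makes the "single bad value $c^*$" argument valid.
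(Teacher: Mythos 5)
Your proposal is correct in substance and shares its skeleton with the paper's proof: both are inductions on the number of variables, both use the decomposition of Proposition~\ref{prop:split}, and both rest on the observation that at most one value of $x_n$ can be ``bad.'' The packaging, however, differs in two instructive ways. First, you recast the problem as Schwartz--Zippel for the difference polynomial $Q=\pi(\varphi_1,\mathtt{w_1})-\pi(\varphi_2,\mathtt{w_2})$ and prove the additive bound $\Pr[Q(\sigma)=0]\le n/m$ by conditioning on the last coordinate (at most one value $c^*$ makes the restriction vanish identically, via your coefficient argument, then a union bound plus the induction hypothesis on the remaining variables); the paper instead counts disagreement witnesses directly, showing that at least $(m-1)^{n}$ of the $m^n$ points witness $\pi_1\neq\pi_2$, i.e.\ a multiplicative $(1-1/m)^n$ bound. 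Both bounds suffice for the stated $\delta$. Second, your induction ranges over arbitrary nonzero multilinear polynomials, which is arguably a cleaner formulation than the paper's: the paper applies its induction hypothesis to the restrictions $\pi|_{x_n=0}(\varphi_i,\mathtt{w_i})$, which are no longer network polynomials of probability distributions, so the generalization you make explicit is implicitly needed in the paper's argument anyway. The one place you fall marginally short is strictness at $n=1$: the lemma asserts $>1-\delta$, while your generic ``at most one root'' base case yields only $\Pr[\text{agreement}]\le 1/m$, which equals $\delta$ exactly when $1/\delta$ is an integer. The paper's base case avoids this by exploiting the specific form of the network polynomials: for $n=1$, $Q(x)$ is a nonzero multiple of $2x-1$, whose only root is $1/2\notin[m]$, so the agreement probability is $0$. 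Splicing that one-line computation into your base case gives the strict inequality for all $n$, and for $n\ge 2$ your own bound $n/m-(n-1)/m^2<n/m\le\delta$ already delivers strictness.
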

\begin{proof}
	For $n=1$, $\sigma$ is an assignment to a single variable $x$. The polynomial on the single variable $x$ can be parameterised as $\pi(\varphi, \mathtt{w})(x) = \alpha x + (1-\alpha) (1-x)$ where parameter $\alpha = \frac{\mathtt{w(x)}}{\sum_{\theta \in R_{\varphi}}\mathtt{w(\theta)}}$. Let polynomials $\pi(\varphi_1, \mathtt{w_1}), \pi(\varphi_2, \mathtt{w_2})$ be parameterised with $\alpha_1, \alpha_2$, respectively. Our assumption that $ P(\varphi_1,\mathtt{w_1}) \not \equiv P(\varphi_2,\mathtt{w_2})$ immediately leads to the fact that $\pi(\varphi_1, \mathtt{w_1}) \not \equiv \pi(\varphi_2, \mathtt{w_2})$ which in turn implies that $\alpha_1 \not = \alpha_2$.
	
	The the set of inputs $x$ for which two non-equivalent polynomials agree is given by,
	\begin{align*}
		\pi(\varphi_1, \mathtt{w_1})(x) &= \pi(\varphi_2, \mathtt{w_2})(x) \\
		\alpha_1 x + (1-\alpha_1) (1-x) &= \alpha_2 x + (1-\alpha_2) (1-x)\\
		2(\alpha_1-\alpha_2)x  &= \alpha_1 - \alpha_2\\
		x&= 1/2 
	\end{align*}
	
	From the initial assumption we know that $x$ can only take integer values, hence there are no inputs in the set $[m]$ for which $\pi(\varphi_1, \mathtt{w_1})(\sigma) \not = \pi(\varphi_2, \mathtt{w_2})(\sigma)$. Thus, for $n=1$, and any $\sigma$, $\Pr[\pi(\varphi_1,{\mathtt{w_1}})(\sigma) \not = \pi(\varphi_2,\mathtt{w_2})(\sigma) \mid P(\varphi_1,\mathtt{w_1}) \not \equiv P(\varphi_2,\mathtt{w_2})  ]=0$
	
	We now assume that the hypothesis holds for $n-1$ variables. Consider polynomials $\pi(\varphi_1, \mathtt{w_1}) \not \equiv \pi(\varphi_2, \mathtt{w_2})$ over $n$ variables. From  Prop~\ref{prop:split} we know that at least one of the following holds:
	\begin{itemize}
		\item $\pi|_{x_i = 1}(\varphi_1, \mathtt{w_1}) \not = \pi|_{x_i = 1}(\varphi_2, \mathtt{w_2})$
		\item $\pi|_{x_i = 0}(\varphi_1, \mathtt{w_1}) \not = \pi|_{x_i = 0}(\varphi_2, \mathtt{w_2})$
	\end{itemize}
	Without any loss of generality we assume the latter. Then we know that there exists a set $\Sigma \subseteq [m]^{n-1}, |\Sigma| \geq (m-1)^{n-1}$, such that 
	\begin{align*}
		\forall_{\sigma \in \Sigma}, \pi|_{x_n = 0}(\varphi_1, \mathtt{w_1})(\sigma) \not = \pi|_{x_n = 0}(\varphi_2, \mathtt{w_2})(\sigma)
	\end{align*}
	
	The set of assignments $\sigma$ for which $\pi(\varphi_1, \mathtt{w_1})(\sigma) = \pi(\varphi_2, \mathtt{w_2})(\sigma)$ is given by,
	\begin{align*}
		\pi(\varphi_1, \mathtt{w_1})(\sigma) &= \pi(\varphi_2, \mathtt{w_2})(\sigma)
		\\
		(1 - x_{n}) \pi|_{x_{n} = 0}(\varphi_1,\mathtt{w_1})(\sigma) + x_{n}\pi|_{x_{n} = 1}(\varphi_1,\mathtt{w_1})(\sigma) &= (1 - x_{n}) \pi|_{x_{n} = 0}(\varphi_2,\mathtt{w_2})(\sigma) + x_{n} \pi|_{x_{n} = 1}(\varphi_2,\mathtt{w_2})(\sigma)
	\end{align*}
	\begin{align*}
		x_{n} (\pi|_{x_{n} = 1}(\varphi_1,\mathtt{w_1})(\sigma) - \pi|_{x_{n} = 0}(\varphi_1,\mathtt{w_1})(\sigma) - \pi|_{x_{n} = 1}(\varphi_2,\mathtt{w_2})(\sigma)
		+ \pi|_{x_{n} = 0}(\varphi_2,\mathtt{w_2})(\sigma))\\
		= \pi|_{x_{n} = 0}(\varphi_2,\mathtt{w_2})(\sigma) - \pi|_{x_{n} = 0}(\varphi_1,\mathtt{w_1})(\sigma)
	\end{align*}
	From the assumptions we know that there are at least $(m-1)^{n-1}$ assignments $\sigma$ s.t.  $ \pi|_{x_{n} = 0}(\varphi_2,\mathtt{w_2})(\sigma) - \pi|_{x_{n} = 0}(\varphi_1,\mathtt{w_1})(\sigma) \not = 0$, from which we can conclude that the RHS is non-zero. Thus for all such $\sigma$ there can be at most one value of $x_n$ for which the equality holds, which leaves $m-1$ values which $x_n$ cannot take. Thus there are at least $(m-1)\times (m-1)^{n-1} = (m-1)^{n}$ assignments to $n$ variables for which $\pi(\varphi_1, \mathtt{w_1})(\sigma) \not =  \pi(\varphi_2, \mathtt{w_2})(\sigma)$. 
	
	Since the total number of assignments for $n$ variables is $m^n$, out of which $(m-1)^n$ witness the non-equivalence of the two probability distributions, we know that for a randomly chosen assignment $\sigma \sim [m]^n $, we have
	\begin{align*}
		\Pr[\pi(\varphi_1,{\mathtt{w_1}})(\sigma) \not = \pi(\varphi_2,\mathtt{w_2})(\sigma) \mid  P(\varphi_1,\mathtt{w_1}) \not \equiv P(\varphi_2,\mathtt{w_2})  ] 	&\geq \frac{(m-1)^n}{m^n} \geq \left( 1- \frac{ \delta }{n} \right)^n  \\&> 1- \delta \text{\quad(using $m$ from Algorithm~\ref{alg:exactprobeq})}	
	\end{align*}
\end{proof}

\clearpage
\subsection{Omitted proof from the analysis of Algorithm~\ref{alg:epsapproximateprobeq}}\label{sec:approx}

In this subsection, we present the proof of  Theorem~\ref{thm:tolapprox}(B), and Theorem~\ref{thm:runtime}. Recall that we use $P_1$ and $P_2$ to refer to $P(\varphi_1, \mathtt{w_1})$ and $P(\varphi_2, \mathtt{w_2})$, respectively.

\subsection{Proof of Lemma~\ref{lem:allerrors}}\label{sec:proof9}
\begin{proof}
	The quantity $r(\sigma)$ (line~\ref{line:r2} from Algorithm~\ref{alg:epsapproximateprobeq}) conditioned on the event $\overline{\mathtt{Fail_i}} \subset \mathtt{Good}$:
	\begin{align*}
		r(\sigma) = \frac{\mathtt{w_2}(\sigma)}{\awct(\sqrt{1+\gamma/4}-1,\delta/8,\varphi_2,\mathtt{w_2})} \cdot 
		\frac{\awct(\sqrt{1+\gamma/4}-1,\delta/8,\varphi_1, \mathtt{w_1})}{\mathtt{w_1}(\sigma)}
	\end{align*}
	Conditioned on the events $\mathtt{Pass_1}, \mathtt{Pass_2} \subset \mathtt{Good}$, we know that with probability 1:
	\begin{align*}
		\frac{\mathtt{w_2}(\sigma)\mathtt{w_1}(\varphi_1)}{\mathtt{w_2}(\varphi_2)\mathtt{w_1}(\sigma)} (\sqrt{1+\gamma/4})^{-2} < 	r(\sigma) 
		<(\sqrt{1+\gamma/4})^2  \frac{\mathtt{w_2}(\sigma)\mathtt{w_1}(\varphi_1)}{\mathtt{w_2}(\varphi_2)\mathtt{w_1}(\sigma)}  
	\end{align*}
	Which gives us: $\frac{P_2(\sigma)}{P_1(\sigma)}  (1+\gamma/4)^{-1}<r(\sigma) < (1+\gamma/4) \frac{P_2(\sigma)}{P_1(\sigma)}$ and therefore,\\
	$		\left| r(\sigma)-\frac{P_2(\sigma)}{P_1(\sigma)} \right| \leq   \frac{P_2(\sigma)}{P_1(\sigma)} \cdot   \underset{0<\gamma<1}{max}\left(\gamma/4, 1-\frac{1}{1+\gamma/4}\right)\leq  \frac{P_2(\sigma)}{P_1(\sigma)} \cdot \gamma/4 $
\end{proof}
\subsubsection{Proof of Theorem~\ref{thm:tolapprox}(A)}\label{sec:proof8}
\begin{proof}
	We assume the event $\mathtt{Good}$. Let $\sigma_i$ be the sample returned by the sampler $\asamp$ in the $i$th iteration.  
	If $r(\sigma_i)>1$, $\Gamma[i] $ takes value 0, else $\Gamma[i]=1-r(\sigma_i)$. 	Thus $\Gamma[i]$ is a r.v. which takes on a value from $[0,1]$. We can write $\Gamma[i] = \mathbbm{1}\left(r(\sigma_i)<1\right)\left(1- r(\sigma_i)\right)$	
	The expectation of $\Gamma[i]$ is:
	\begin{align}
		&\mathbbm{E}[\Gamma[i]] =\sum_{\sigma \in \{0,1\}^n} \mathbbm{1}\left(r(\sigma)<1\right)\left(1- r(\sigma)\right)\cdot \Pr[\asamp(\gamma/(4\eta- 2\gamma),\delta/4m,\varphi_1,\mathtt{w_1}) = \sigma] \label{line:expectationexpansion}
	\end{align}
	According to definition~\ref{def:asamp}, and our assumption of $\overline{\mathtt{Fail_i}} \subset \mathtt{Good}$, we know that with probability 1
	$\Pr[\asamp(\gamma/(4\eta - 2\gamma), \delta/4m, \varphi_1 , \mathtt{w_1}) = \sigma]\leq (1+\gamma/(4\eta - 2\gamma))P_1(\sigma)
	$. Thus we have,
	\begin{align*}
		\mathbbm{E}[\Gamma[i]]
		&\leq  \sum_{\sigma \in \{0,1\}^n} \mathbbm{1}\left(r(\sigma)<1\right)\left(1- r(\sigma)\right)\cdot (1+\gamma/(4\eta-2\gamma))P_1(\sigma)
	\end{align*}
	Recall that in Lemma~\ref{lem:boundonA}, we define $A = \sum_{\sigma \in \{0,1\}^n} \mathbbm{1}\left(r(\sigma)<1\right)\left(1 -r(\sigma)\right) P_1(\sigma)$. Therefore, we can simplify the above expression as: $\mathbbm{E}[\Gamma[i]] = (1+\gamma/(4\eta-2\gamma)) \cdot  A$. We can then use the assumption of $\varepsilon$-closeness and the result of Lemma~\ref{lem:boundonA}-1 to find a bound on the expectation,
	\begin{align*}
		\mathbbm{E}[\Gamma[i]]& \leq\left(1+\gamma/(4\eta-2\gamma)\right)(\varepsilon + \gamma/4)
		\leq \varepsilon + \gamma/2
	\end{align*}
	Using  the linearity of expectation we get:
	$		\mathbbm{E}\left[\sum_{i \in [m]} \Gamma[i]\right] <   m(\varepsilon + \gamma/2 ) $. $\teq$ returns {\reject} when $\sum_{i \in [m]}\Gamma[i]>m(\varepsilon+\gamma)$ on line~\ref{line:if2}. Since the $\Gamma[i]$’s are i.i.d random variables taking values in [0, 1], we apply the Chernoff bound to find the probability of {\accept}, assuming the event $\mathtt{Good}$:
	\begin{align*}
		\Pr\left[\teq\text{ returns {\accept}} ~\bigg|~  \mathtt{Good} \right] &=1 - \Pr\left[ \sum_{i \in [m]} \Gamma[i]>m(\varepsilon+\gamma) \right]  
		&\geq  1 - 2e^{-\gamma^2m/2}  \geq  1- \delta/2   
	\end{align*}
	The value for $m$ is taken from line \ref{line:m2} of Algorithm~\ref{alg:epsapproximateprobeq}. Using ~(\ref{line:excludebadevents}), we see that the probability of $\teq$ returning {\accept} is: 	$\Pr[\teq\text{ returns {\accept}}] \geq   \Pr[\teq\text{ returns {\accept}} \mid   \mathtt{Good}] \Pr[ \mathtt{Good}] =(1-\delta/2)(1-\delta/2) \geq 1 - \delta$ 
\end{proof}
\subsubsection{Proof of Theorem~\ref{thm:tolapprox}(B)}\label{sec:proof6}
\begin{proof}
	First we assume the event $\mathtt{Good}$. Then according to definition ~\ref{def:asamp}, we know that with probability 1 (since we assume event $\overline{\mathtt{Fail_i}}\subset \mathtt{Good}$)
	\begin{align*}
		\Pr[\asamp(\gamma/(4\eta - 2\gamma), \delta/4m, \varphi_1 , \mathtt{w_1}) = \sigma]     \geq \frac{P_1(\sigma)}{(1+\gamma/(4\eta -2\gamma))}  
	\end{align*} 
	Thus substituting into (\ref{line:expectationexpansion}), we get 
	\begin{align}
		\mathbbm{E}[\Gamma[i]]  \geq \sum_{\sigma \in \{0,1\}^n}\mathbbm{1}\left(r(\sigma)<1\right)(1-r(\sigma))\frac{P_1(\sigma_i)}{1+\gamma/(4\eta -2\gamma)} \label{line:boundonexpectationeta}
	\end{align}
	Then we use the $\eta$-farness assumption and Lemma~\ref{lem:boundonA}-2 
	\begin{align}	
		\mathbbm{E}[\Gamma[i]] &\geq \frac{\eta - \gamma/4}{1+\gamma/(4\eta -2\gamma)} = \eta - \gamma/2 \label{line:worstcaseexpectationeta}
	\end{align}
	The algorithm returns {\accept} when $\sum_{i \in [m]}\Gamma[i]\leq m(\varepsilon+\gamma)$ (on line~\ref{line:if2}). Then using ~(\ref{line:worstcaseexpectationeta}) and the linearity of expectation.
	\begin{align*}
		\mathbbm{E}\left[\sum_{i \in [m]} \Gamma[i]]\right] &\geq  m(\eta- \gamma/2) 
	\end{align*}
	Since the $\Gamma[i]$’s are i.i.d random variables taking values in [0, 1], we apply the Chernoff bound to find the probability of {\reject}, given the assumption of the event $\mathtt{Good}$:
	
	\begin{align}
		\Pr\left[\teq\text{ returns {\reject}} \mid \mathtt{Good}\right]&= 1 - \Pr\left[ \sum_{i \in [m]} \Gamma[i] \leq m(\varepsilon+\gamma) \right] \nonumber 
		\\
		&\geq  1 - \Pr\left[m(\eta-\gamma/2) - \sum_{i \in [m]} \Gamma[i]   \geq    m(\eta-\gamma/2-\varepsilon-\gamma) \right] \nonumber \\
		&\geq  1 - \Pr\left[|\sum_{i \in [m]} \Gamma[i] - m(\eta-\gamma/2)  | \geq m\gamma/2 \right] \nonumber \\
		&\geq  1 - 2e^{-\gamma^2m/2}  \geq  1- \delta/2 \text{\quad (Substituting $m$ as in line \ref{line:m2})} \nonumber 
	\end{align}
	
	Hence, the probability that Algorithm~\ref{alg:epsapproximateprobeq} returns {\reject} is 
	\begin{align*}  
		\Pr[\teq\text{ returns {\reject}}]	
		&\geq \Pr\left[\teq\text{ returns {\reject}} \mid \mathtt{Good}\right]\Pr\left[\mathtt{Good}\right]  \\ 
		& =(1-\delta/2)(1-\delta/2) \geq 1 - \delta \quad \text{(Using~(\ref{line:excludebadevents}))}
	\end{align*} 
\end{proof}

\subsubsection{Proof of Theorem~\ref{thm:runtime}}\label{sec:proof7}
\begin{proof}
	$\teq$ makes two calls to $\awct$ on line~\ref{line:awct1} and \ref{line:awct2} of Algorithm~\ref{alg:epsapproximateprobeq}. According to definition~\ref{def:awct}, the runtime of the  $\awct(\sqrt{1+\gamma/4} -1, \delta/8,\varphi,\mathtt{w})$ query is $T(\sqrt{1+\gamma/4}-1,\delta/8,\varphi) = poly((\sqrt{1+\gamma/4} -1)^{-1}, \log(\delta^{-1}),|\varphi|)$. 
	
	Using the identity $1+\frac{x}{2} - \frac{x^2}{2} \leq \sqrt{1+x}$ for $x\geq 0$ and the fact that $\gamma \in (0,1)$
	\begin{align*}
		\frac{1}{\sqrt{1+\gamma/4}-1} \leq 	\frac{1}{\gamma/8 - \gamma^2/32}  <\frac{11}{\gamma} 
	\end{align*}
	Hence any $poly((\sqrt{1+\gamma/4} - 1)^{-1})$ algorithm   also runs in $poly(\gamma^{-1})$. Thus the $\awct$ queries run in $O(poly(\gamma^{-1}, \log(\delta^{-1}), max(|\varphi_1|,|\varphi_2|)))$

	$\teq$ makes $m =\lceil \log(2/\delta)/2\gamma^2\rceil $ calls to $\asamp$ on lines~\ref{line:asamp} of Algorithm~\ref{alg:epsapproximateprobeq}. According to definition~\ref{def:asamp}, the runtime of the  $\asamp(\gamma/(4\eta-2\gamma), \delta/4m,\varphi_1,\mathtt{w_1})$ query is $T(\gamma/(4\eta-2\gamma),\delta/4m,|\varphi_1|) = poly((\gamma/(4\eta-2\gamma))^{-1}, \log((\delta/4m)^{-1}),|\varphi_1|)$. First we see that $\frac{4\eta - 2 \gamma}{\gamma} < \frac{4}{\gamma}$, thus the algorithm remains in $poly(\gamma^{-1})$. We then see that $\log(4m/\delta) = \log(4m) + \log(\delta^{-1})$. Since $\log(m) \in poly(\log(\gamma^{-1}),\log\log(\delta^{-1}))$, we know that $\asamp$ queries run in $O(poly(\gamma^{-1}, \log(\delta^{-1}), max(|\varphi_1|,|\varphi_2|)))$.
	
	Since each $\asamp$ call and each $\awct$ call requires atmost polynomial time in terms of $\gamma^{-1},\log(\delta^{-1})$ and $max(|\varphi_1|,|\varphi_2|)$ we know that the algorithm itself runs in time polynomial in  $\gamma^{-1},\log(\delta^{-1})$ and $max(|\varphi_1|,|\varphi_2|)$.
\end{proof}

\subsection{Proofs omitted from Section~\ref{sec:characterization}}\label{propositions}
For the following proofs, we assume a uniform weight function.  

\begin{restatable}{prop}{proposition}\label{prop:cnfexact}
	If there exists a poly-time randomised algorithm for deciding the equivalence of a pair of PCs with at least one PC in {\cnf}, then NP=RP.
\end{restatable}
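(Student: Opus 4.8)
The plan is to reduce Boolean satisfiability to the \emph{non}-equivalence of two {\cnf}-based PCs under the uniform weight function, and then to read off membership of SAT in RP from the hypothesized randomised equivalence test. Since $\text{RP} \subseteq \text{NP}$ holds unconditionally and RP is closed under polynomial-time many-one reductions, it suffices to place the NP-complete problem SAT in RP, which then forces $\text{NP} \subseteq \text{RP}$ and hence $\text{NP} = \text{RP}$.

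First I would build the gadget. Given a {\cnf} $\psi = C_1 \wedge \cdots \wedge C_m$ over variables $x_1, \ldots, x_n$, introduce a fresh variable $y$ and set $\varphi_1 = \bigwedge_{j=1}^m (C_j \vee y)$, which is logically $\psi \vee y$, and $\varphi_2 = y$. Both are {\cnf}s (a single unit clause in the case of $\varphi_2$), are computable in linear time, and are satisfiable (e.g. by taking $y=1$), so both are legitimate PCs. The construction is engineered so that $R_{\varphi_2}$ is exactly the set of assignments with $y=1$, whereas $R_{\varphi_1}$ additionally contains every assignment with $y=0$ that satisfies $\psi$. Consequently $R_{\varphi_1} = R_{\varphi_2}$ exactly when $\psi$ is unsatisfiable, and $R_{\varphi_1} \supsetneq R_{\varphi_2}$ otherwise.

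Next I would convert this set equality into distributional equivalence. Under the assumed uniform weight function each PC induces the uniform distribution over its models, and two uniform distributions coincide iff they have the same support. Hence $\dtv(P(\varphi_1,\mathtt{w}), P(\varphi_2,\mathtt{w})) = 0$ iff $R_{\varphi_1} = R_{\varphi_2}$ iff $\psi$ is unsatisfiable, while whenever $\psi$ is satisfiable the distance is strictly positive. Now I would invoke the hypothesized poly-time randomised equivalence test on the pair $(\varphi_1, \varphi_2)$ and output ``satisfiable'' iff the test reports non-equivalence ({\reject}). Like $\peq$ (Lemma~\ref{lem:exact}), the test has perfect completeness on equivalent inputs, so when $\psi$ is unsatisfiable it accepts with probability $1$ and we never falsely declare satisfiability; when $\psi$ is satisfiable it rejects with probability at least $1-\delta \geq 1/2$, so we correctly declare satisfiability with probability at least one half. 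This is precisely an RP algorithm for SAT.

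The main obstacle is really the gadget design rather than any hard calculation: the pair must be simultaneously in {\cnf}, satisfiable, and \emph{faithful} in the strong sense that unsatisfiability collapses to \emph{exact} equality of the two model sets (not merely equal cardinality), so that the induced uniform distributions are genuinely identical. The second delicate point is tracking the one-sided error direction: it is the perfect completeness of the equivalence test---its guarantee never to reject a genuinely equivalent pair---that pins the conclusion to RP rather than to the weaker BPP, and I would flag that this is exactly the property enjoyed by the test of Lemma~\ref{lem:exact}.
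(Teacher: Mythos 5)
Your proof is correct and follows the same high-level route as the paper: a polynomial-time reduction from {\cnf} satisfiability to equivalence of two \emph{satisfiable} {\cnf} PCs under a uniform weight function, after which the hypothesised randomised equivalence tester places SAT in RP, giving $\mathrm{NP}=\mathrm{RP}$. The difference is in the gadget. The paper, writing $\varphi$ for the input {\cnf} over $n$ variables, compares $\hat\varphi = (\lnot x_{n+1} \rightarrow \varphi) \wedge (x_{n+1} \rightarrow \bigwedge_{i\in[n]}x_i)$ against $\bigwedge_{i\in[n+1]}x_i$, so the ``always present'' part of the support is a single assignment; you, writing $\psi$ for the input, compare $\psi \vee y$ (kept in {\cnf} as $\bigwedge_j (C_j \vee y)$) against the unit clause $y$, so that part is the entire hyperplane $y=1$. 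Both gadgets are linear-size, keep both circuits satisfiable, and make the two distributions identical exactly when the input is unsatisfiable, so either one proves this proposition, and yours is arguably the more economical. The quantitative behaviour differs, however: if the input has $k \geq 1$ models, the paper's pair satisfies $\dtv = k/(k+1) \geq 1/2$, whereas yours satisfies $\dtv = k/(2^n+k)$, which can be exponentially small. That constant gap is precisely what lets the paper reuse the same two circuits verbatim in Proposition~\ref{prop:cnfapprox} to also rule out \emph{closeness} testing; your gadget could not be reused there, since an honest closeness tester is permitted to return {\accept} on a pair at exponentially small distance. One last point: you assume the hypothesised tester has perfect completeness (as $\peq$ of Lemma~\ref{lem:exact} does) in order to conclude $\mathrm{SAT} \in \mathrm{RP}$ directly, and the proposition's hypothesis does not actually grant one-sided error (the paper's own proof is equally silent on the error model). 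The conclusion survives either way --- with a two-sided-error tester the reduction gives $\mathrm{NP} \subseteq \mathrm{BPP}$, which by a classical result already implies $\mathrm{NP} = \mathrm{RP}$ --- but without perfect completeness your final step needs that extra fact rather than just the definition of RP, so it is good that you flagged the assumption explicitly.
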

\begin{proof}
	For CNFs, testing satisfiability is known to be NP-hard. Consider a CNF $\varphi$ defined over variables $\{x_1,\ldots,x_n\}$ and a circuit $\psi$ s.t. $\psi \equiv \bigwedge_{i\in [n+1]}x_i$. Define
	\begin{align*}
		\hat\varphi = (\lnot x_{n+1} \rightarrow \varphi) \wedge ( x_{n+1} \rightarrow \bigwedge_{i\in[n]}x_i) 
	\end{align*}
	We see that the size of the new CNF is  $|\hat\varphi|\in O(|\varphi| + n)$.  $\hat\varphi$ has at least one satisfying assignment, specifically the assignment $\forall_{i\in [n+1]}x_i = 1$. We notice that $\dtv(P(\hat\varphi, \mathtt{w}), P(\psi, \mathtt{w}))=0$ if and only if $|R_{\varphi}|=0$.  Thus the existence of a poly-time randomised algorithm for deciding whether $\dtv(P(\hat\varphi, \mathtt{w}), P(\psi, \mathtt{w}))=0$ would  imply NP $\subseteq$ RP and hence NP=RP.
\end{proof}

\begin{restatable}{prop}{proposition}\label{prop:cnfapprox}
	If there exists a poly-time randomised algorithm for deciding the closeness of a pair of PCs with at least one PC in {\cnf}, then NP=RP.
\end{restatable}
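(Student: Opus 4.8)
The plan is to reuse verbatim the reduction from Proposition~\ref{prop:cnfexact} and to show that it already produces a \emph{constant gap} in total variation distance, so that not merely an equivalence test but even a promise-gap closeness test suffices to decide satisfiability. Concretely, given a {\cnf} $\varphi$ over $\{x_1,\dots,x_n\}$, I would form the same auxiliary circuit $\hat\varphi = (\lnot x_{n+1}\to\varphi)\wedge(x_{n+1}\to\bigwedge_{i\in[n]}x_i)$ together with $\psi\equiv\bigwedge_{i\in[n+1]}x_i$. Both are {\cnf}s of size $O(|\varphi|+n)$ (note that $\lnot x_{n+1}\to\varphi$ distributes over the clauses of $\varphi$), and $\hat\varphi$ is satisfiable since the all-ones assignment works, so both define valid PCs.

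The crucial step, and the one that separates this from the exact case, is to compute $\dtv(P(\hat\varphi,\mathtt{w}),P(\psi,\mathtt{w}))$ under the uniform weight function and to exhibit a constant gap. I would observe that $R_{\hat\varphi}$ consists of the all-ones assignment together with one point $(\sigma,0)$ for each $\sigma\in R_\varphi$, so $|R_{\hat\varphi}|=|R_\varphi|+1$, while $P(\psi,\mathtt{w})$ is the point mass at the all-ones assignment. A direct computation of the $\ell_1$ difference then yields $\dtv(P(\hat\varphi,\mathtt{w}),P(\psi,\mathtt{w}))=|R_\varphi|/(|R_\varphi|+1)$. Hence if $\varphi$ is unsatisfiable the distance is exactly $0$, and if $\varphi$ is satisfiable it is at least $1/2$.

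With this gap in hand, I would fix any closeness parameters with $\varepsilon<\eta\le 1/2$ (for instance $\varepsilon=1/4$, $\eta=1/2$) and feed $\hat\varphi,\psi$ to the hypothesised poly-time closeness test. On unsatisfiable instances $\dtv=0\le\varepsilon$, so the test returns {\accept} with probability $\ge 1-\delta$; on satisfiable instances $\dtv\ge 1/2\ge\eta$, so it returns {\reject} with probability $\ge 1-\delta$. Reading {\reject} as ``satisfiable'', this is a bounded-error poly-time randomised decision procedure for {\cnf}-SAT. Since SAT is NP-complete, this places NP inside BPP; invoking the downward self-reducibility of SAT to amplify the success probability and recover a witness then turns this into a one-sided-error algorithm, giving NP $\subseteq$ RP and therefore NP $=$ RP.

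The main obstacle I anticipate is purely the gap computation in the second paragraph: one must confirm that the far case is bounded away from the close threshold by a constant independent of $\varphi$, which is exactly what $|R_\varphi|/(|R_\varphi|+1)\ge 1/2$ provides. A secondary point to handle carefully is the passage from a two-sided-error tester to the stated RP conclusion; if one instead assumes the tester is already of the one-sided $\teq$/$\peq$ flavour (perfect on the close side), then the RP claim is immediate and no self-reducibility argument is needed.
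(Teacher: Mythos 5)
Your proposal is correct and follows essentially the same route as the paper: it reuses the $\hat\varphi,\psi$ construction from Proposition~\ref{prop:cnfexact}, establishes the constant gap ($\dtv = 0$ when $\varphi$ is unsatisfiable versus $\dtv = |R_\varphi|/(|R_\varphi|+1) \geq 1/2$ when satisfiable), and concludes that a promise-gap closeness tester would decide SAT with bounded error, yielding NP~$=$~RP. If anything, your write-up is more careful than the paper's own two-line argument, since you make the distance computation explicit and correctly handle the passage from two-sided error (NP~$\subseteq$~BPP) to the one-sided RP conclusion via self-reducibility.
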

\begin{proof}
	$\dtv(P(\hat\varphi, \mathtt{w}), P(\psi, \mathtt{w}))\geq 0.5$ if and only if $|R_{\varphi}|>0$.	Assume there exists a poly-time randomised algorithm which returns {\reject} if $\dtv(P(\hat\varphi, \mathtt{w}), P(\psi, \mathtt{w}))\geq 0.4$ and {\accept} if $\dtv(P(\hat\varphi, \mathtt{w}), P(\psi, \mathtt{w}))\leq  0.1$ with probability $> 2/3 $. Such an algorithm would imply BPP $\subseteq$ NP, and hence NP=RP.
\end{proof}

\begin{restatable}{prop}{proposition}\label{prop:dnfexact}
	If there exists a poly-time randomised algorithm for deciding the equivalence of a pair of PCs with at least one PC in {\dnf}, then NP=RP.
\end{restatable}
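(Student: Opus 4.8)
The plan is to exploit the fact that, dually to the {\cnf} case, the natural hard problem for {\dnf} is \emph{validity} (tautology checking), which is co-NP-complete, whereas satisfiability of a {\dnf} is decidable in polynomial time. The gadget $\hat\varphi$ used in Proposition~\ref{prop:cnfexact} does not port over directly: it is an \emph{and} of two sub-formulas, and a conjunction of {\dnf}s is in general not a {\dnf}. I would therefore reduce from the validity problem rather than from satisfiability, comparing the candidate circuit against a fixed tautology.

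Concretely, let $\varphi$ be a {\dnf} over $\{x_1,\dots,x_n\}$ whose validity we wish to decide. First I would test $\varphi$ for satisfiability; a {\dnf} is satisfiable iff one of its terms is free of a complementary pair of literals, so this is a poly-time check, and if $\varphi$ is unsatisfiable it is trivially not a tautology and we may reject immediately. Otherwise $\varphi$ is a legitimate (satisfiable) PC. Let $\psi = x_1 \vee \lnot x_1$, a two-term {\dnf} over the same variables with $R_\psi = \{0,1\}^n$. Under the uniform weight function $\mathtt{w}$, both $P(\varphi,\mathtt{w})$ and $P(\psi,\mathtt{w})$ are uniform over their respective models, so
\[
\dtv(P(\varphi,\mathtt{w}),P(\psi,\mathtt{w})) = 0 \iff R_\varphi = R_\psi = \{0,1\}^n \iff \varphi \text{ is a tautology.}
\]
Both circuits are satisfiable and in {\dnf}, so the reduction meets the hypothesis that at least one PC is a {\dnf}.

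The remaining step is the complexity bookkeeping. A poly-time randomized equivalence test applied to $(\varphi,\mathtt{w},\psi,\mathtt{w})$ returns {\accept} with probability $1$ when $\varphi$ is a tautology and {\reject} with probability at least $1-\delta$ otherwise; fixing $\delta \le 1/3$, this is exactly a one-sided (coRP) decision procedure for {\dnf}-tautology. Since {\dnf}-tautology is co-NP-complete, this places co-NP inside coRP, i.e. co-NP = coRP, which by complementation is equivalent to NP = RP.

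I expect the main obstacle to be not the construction but getting the complexity accounting exactly right: one must (i) guarantee the reduced instance is satisfiable so that it is a well-defined PC, handling the unsatisfiable case separately, and (ii) track the direction of the one-sided error carefully so that the equivalence test lands co-NP (and not NP) in the randomized class, since it is validity rather than satisfiability of {\dnf}s that is hard here. Everything else---uniformity of the weight function, the poly-size of $\psi$, and the model-count identity---is routine.
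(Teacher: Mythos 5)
Your proof is correct and follows essentially the same route as the paper: reduce from {\dnf} validity (co-NP-hard) by testing equivalence against a fixed tautology under uniform weights, concluding co-NP $\subseteq$ co-RP and hence NP $=$ RP. The only differences are cosmetic refinements — the paper compares against $\psi = True$ rather than $x_1 \vee \lnot x_1$, and it does not bother with the (easy) unsatisfiable case that you handle explicitly.
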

\begin{proof}
	For DNFs, deciding validity is known to be co-NP-hard. Given DNF $\varphi$ and a circuit $\psi = True$, the existence of a poly-time randomized algorithm for checking the equivalence of $\psi$ and $\varphi$  would imply that co-NP $\subseteq$ co-RP and hence co-NP = co-RP. 
\end{proof}
Using Corollary 6.3 from \cite{KL80} we see that PH collapses as a result of either of the above implications.

From the set inclusions \dnf $\subseteq $ \sdnnf $\subseteq $ {\dnnf} and  
\cnf $\subseteq $ \nnf, we obtain all hardness results. From the fact that {\ddnnf}s support weighted counting and sampling we have the existence results.

The following lemma supports our claim in table \ref{tab:tolerant}. 
\approxcounting
\begin{proof}
	Here we will assume that the weights are in the dyadic form i.e. they can be represented as the fraction $d/2^p$ for $d,p \in \mathbb{Z}^+$. Then using the weighted to unweighted construction from~\cite{CFMV15}, the problem of approximate weighted sampling over {\sdnnf} can be reduced to approximate uniform sampling. Given a {\sdnnf} $\varphi$, and a weight function $\mathtt{w}$, we generate a {\sdnnf} $\varphi_w \equiv \varphi \wedge \bigwedge_{i\in[n]}(\lnot x_i \vee C^1_i) \wedge \bigwedge_{i \in [n]}( x_i \vee C^0_i)$. Here, $C^0_i$ is chain formula having exactly $w(\lnot x_i)\times 2^p = 2^p-d$ satisfying assignments, and $C^1_i$ is a chain formula with   $\mathtt{w}(x_i) \times 2^p = d$ satisfying assignments.

	The property of decomposability on the $\wedge$ nodes of $\varphi$ is preserved as each $C_i$ introduces a new set of variables disjoint from the set of variables in $\varphi$ and and also from all $C_j$, such that $j\not = i$. The $\wedge$ nodes in the chain formula are also trivially decomposable and structured as each chain formula variable appears exactly once in the formula.
	
	If $\sigma$ is an assignment to the set of variables of $S$ and if $S' \subseteq S$, then let $\sigma_{\downarrow S'}$ denote the projection of $\sigma$ on the variables in $S'$. The weighted formula $\varphi$ is defined over variable set $var(\varphi)$. The formula $\varphi_w$ defined above has the property that if $ \varphi(\sigma) = 1$, then $ |\{\sigma'|\varphi_w(\sigma')=1 \wedge \sigma'_{\downarrow var(\varphi)} = \sigma \}|/|R_{\varphi_w}|=\mathtt{w}(\sigma) $. Thus a uniform distribution on $R_{\varphi_w}$, when projected on $var(\varphi)$ induces the weighted distribution $P(\varphi,\mathtt{w})$. This property allows weighted sampling and counting on $\varphi$  with the help of a uniform sampler for the generated formula $\varphi_w$.\end{proof}

\clearpage
\section{Experimental evaluation}\label{sec:extended}
In this section we will first discuss the method for generating the synthetic dataset, and then we present the extended table of results.
\subsection{One variable perturbation}\label{sec:syntheticbenchmark}
Consider two weight functions $\mathtt{w_1}$ and $\mathtt{w_2}$ that differ only in the weight assigned to the literals $v^0$ and $v^1$. Then, from the definition of $\dtv$:

\begin{align*}
	\dtv(P(\varphi, \mathtt{w_1}),P(\varphi, \mathtt{w_2})) =\frac{1}{2} \sum_{\sigma \in \{0,1\}^n} \left|\frac{\mathtt{w_1}(\sigma)}{\mathtt{w_1}(\varphi)} - \frac{\mathtt{w_2}(\sigma)}{\mathtt{w_2}(\varphi)}\right|
\end{align*}

Let $S \subseteq \{0,1\}^n$ be the set of assignments for which $\frac{\mathtt{w_1}(\sigma)}{\mathtt{w_1}(\varphi)} > \frac{\mathtt{w_2}(\sigma)}{\mathtt{w_2}(\varphi)}$. Thus,
\begin{align*}
	\dtv (P(\varphi, \mathtt{w_1}),P(\varphi, \mathtt{w_2})) =\sum_{\sigma \in S}\left( \frac{\mathtt{w_1}(\sigma)}{\mathtt{w_1}(\varphi)} - 
	\frac{\mathtt{w_2}(\sigma)}{\mathtt{w_2}(\varphi)} \right)
\end{align*}
Lets assume wlog that $\mathtt{w_1}$ assigns a larger weight to $v^1$ than $\mathtt{w_2}$  does. Then, $S$ contains all and only those assignments that have literal $v^1$, i.e. $S \equiv \varphi \wedge v^1$. Thus, 
\begin{align*}
	\dtv (P(\varphi, \mathtt{w_1}),P(\varphi, \mathtt{w_2}))= \frac{\mathtt{w_1}(\varphi \wedge v^1)}{\mathtt{w_1}(\varphi)} - \frac{\mathtt{w_2}(\varphi \wedge v^1)}{\mathtt{w_2}(\varphi)} 
\end{align*}
We can rewrite $\mathtt{w_1}(\varphi \wedge v^1) = \mathtt{w_1'}(\varphi) \times \mathtt{w_1}(v^1)$, where $\mathtt{w_1'}$ is $\mathtt{w_1}$ with the weight of $v^1$ set to 1. Using a similar transformation on $\mathtt{w_2}(\varphi \wedge v^1)$ we get 
\begin{align*}
	\dtv (P(\varphi, \mathtt{w_1}),P(\varphi, \mathtt{w_2}))= \frac{\mathtt{w_1'}(\varphi) \times \mathtt{w_1}(v^1)}{\mathtt{w_1}(\varphi)} - \frac{\mathtt{w_2'}(\varphi ) \times \mathtt{w_2}(v^1)}{\mathtt{w_2}(\varphi)}
\end{align*}
We know that $\mathtt{w_1'}(\varphi) = \mathtt{w_2'}(\varphi)$ as $\mathtt{w_1}$ and $\mathtt{w_2}$ differed only on the one variable $v^1$.
\begin{align*}
	\dtv(P(\varphi, \mathtt{w_1}),P(\varphi, \mathtt{w_2})) = \mathtt{w_1'}(\varphi) \times \left(\frac{ \mathtt{w_1}(v^1)}{\mathtt{w_1}(\varphi)} - \frac{\mathtt{w_2}(v^1)}{\mathtt{w_2}(\varphi)}\right)
\end{align*}
All quantities in the above expression are either known constants or they are defined w.r.t the already compiled {\ddnnf}, thus guaranteeing that $\dtv(P(\varphi, \mathtt{w_1}),P(\varphi, \mathtt{w_2}))$ can be computed in poly-time.

\subsection{Extended table of results}
The timeout for all our experiments was set to 7200 seconds. 
\subsubsection{Synthetic PCs}
In the following table, the first column indicates the benchmark, the second and the third indicate the closeness parameter $\varepsilon$ and $\eta$ used in the test.  The fourth column indicates actual $\dtv$ distance between the two benchmark PCs . The fifth column indicates the test outcome and the sixth represents the expected outcome. `A' represents {\accept} and `R' represents {\reject} and `A/R' represents that both `A' and `R' are acceptable outputs. 

\begin{longtable}{p{1.8cm}p{1cm}p{1cm}p{1.8cm}p{1.8cm}p{2.4cm}} \caption{The Extended Table}\\
	\toprule Benchmark&$\varepsilon$&$\eta$&Actual $\dtv$&Result& Expected Result\\ 
	\midrule
14\_4&0.9&0.99&0.773&A&A\\ \midrule 
17\_2&0.75&0.99&0.998&R&R\\ \midrule 
14\_2&0.9&0.99&0.764&A&A\\ \midrule 
18\_3&0.75&0.96&0.930&R&A/R\\ \midrule 
17\_4&0.75&0.99&0.941&R&A/R\\ \midrule 
18\_3&0.8&0.99&0.930&R&A/R\\ \midrule 
17\_1&0.8&0.96&0.874&R&A/R\\ \midrule 
17\_0&0.85&0.94&0.968&A&A/R\\ \midrule 
14\_2&0.85&0.94&0.764&A&A\\ \midrule 
14\_4&0.85&0.94&0.773&A&A\\ \midrule 
15\_4&0.9&0.94&0.941&R&R\\ \midrule 
16\_2&0.9&0.99&0.987&A&A/R\\ \midrule 
14\_0&0.75&0.96&0.771&A&A/R\\ \midrule 
16\_3&0.8&0.9&0.879&A&A/R\\ \midrule 
17\_2&0.75&0.96&0.998&R&R\\ \midrule 
15\_0&0.8&0.9&0.984&R&R\\ \midrule 
18\_0&0.75&0.94&0.994&R&R\\ \midrule 
17\_4&0.75&0.96&0.941&R&A/R\\ \midrule 
18\_4&0.75&0.99&0.907&R&A/R\\ \midrule 
18\_2&0.75&0.99&0.918&R&A/R\\ \midrule 
14\_1&0.75&0.99&0.740&A&A\\ \midrule 
16\_1&0.85&0.9&0.918&R&R\\ \midrule 
17\_0&0.85&0.96&0.968&A&A/R\\ \midrule 
15\_4&0.85&0.94&0.941&R&R\\ \midrule 
17\_0&0.8&0.94&0.968&A&A/R\\ \midrule 
17\_0&0.85&0.99&0.968&A&A/R\\ \midrule 
14\_0&0.9&0.94&0.771&A&A\\ \midrule 
16\_4&0.75&0.96&0.833&A&A/R\\ \midrule 
15\_1&0.85&0.96&0.927&R&A/R\\ \midrule 
14\_4&0.9&0.96&0.773&A&A\\ \midrule 
14\_3&0.8&0.96&0.852&A&A/R\\ \midrule 
14\_2&0.9&0.96&0.764&A&A\\ \midrule 
16\_4&0.75&0.9&0.833&A&A/R\\ \midrule 
17\_3&0.9&0.94&0.914&A&A/R\\ \midrule 
15\_1&0.85&0.9&0.927&R&R\\ \midrule 
16\_4&0.9&0.96&0.833&A&A\\ \midrule 
14\_3&0.8&0.94&0.852&A&A/R\\ \midrule 
16\_1&0.8&0.99&0.918&R&A/R\\ \midrule 
16\_2&0.85&0.96&0.987&A&A/R\\ \midrule 
15\_3&0.75&0.99&0.804&R&A/R\\ \midrule 
14\_3&0.75&0.94&0.852&A&A/R\\ \midrule 
16\_4&0.85&0.96&0.833&A&A\\ \midrule 
18\_0&0.8&0.9&0.994&R&R\\ \midrule 
18\_4&0.8&0.94&0.907&R&A/R\\ \midrule 
18\_4&0.85&0.99&0.907&A&A/R\\ \midrule 
18\_0&0.9&0.99&0.994&R&R\\ \midrule 
15\_1&0.9&0.99&0.927&R&A/R\\ \midrule 
14\_3&0.85&0.96&0.852&A&A/R\\ \midrule 
16\_2&0.75&0.94&0.987&R&R\\ \midrule 
15\_0&0.9&0.96&0.984&R&R\\ \midrule 
18\_4&0.8&0.96&0.907&R&A/R\\ \midrule 
17\_0&0.75&0.9&0.968&R&R\\ \midrule 
18\_1&0.9&0.96&0.993&R&R\\ \midrule 
18\_0&0.9&0.96&0.994&R&R\\ \midrule 
17\_3&0.8&0.99&0.914&A&A/R\\ \midrule 
18\_3&0.85&0.9&0.930&R&R\\ \midrule 
17\_2&0.85&0.94&0.998&R&R\\ \midrule 
15\_1&0.75&0.94&0.927&R&A/R\\ \midrule 
14\_3&0.75&0.9&0.852&R&A/R\\ \midrule 
15\_3&0.8&0.99&0.804&R&A/R\\ \midrule 
17\_3&0.85&0.94&0.914&A&A/R\\ \midrule 
14\_3&0.8&0.9&0.852&R&A/R\\ \midrule 
17\_3&0.75&0.99&0.914&R&A/R\\ \midrule 
14\_3&0.9&0.99&0.852&A&A\\ \midrule 
17\_0&0.75&0.94&0.968&R&R\\ \midrule 
18\_2&0.8&0.99&0.918&R&A/R\\ \midrule 
17\_0&0.8&0.96&0.968&A&A/R\\ \midrule 
17\_1&0.85&0.94&0.874&A&A/R\\ \midrule 
16\_3&0.8&0.94&0.879&A&A/R\\ \midrule 
14\_1&0.85&0.94&0.740&A&A\\ \midrule 
16\_3&0.85&0.99&0.879&A&A/R\\ \midrule 
18\_0&0.85&0.96&0.994&R&R\\ \midrule 
15\_3&0.9&0.94&0.804&A&A\\ \midrule 
16\_4&0.8&0.9&0.833&A&A/R\\ \midrule 
14\_1&0.75&0.96&0.740&A&A\\ \midrule 
16\_2&0.8&0.9&0.987&R&R\\ \midrule 
17\_1&0.75&0.96&0.874&R&A/R\\ \midrule 
15\_1&0.8&0.9&0.927&R&R\\ \midrule 
15\_4&0.8&0.96&0.941&R&A/R\\ \midrule 
18\_2&0.9&0.94&0.918&R&A/R\\ \midrule 
18\_4&0.85&0.94&0.907&R&A/R\\ \midrule 
18\_4&0.85&0.96&0.907&R&A/R\\ \midrule 
16\_3&0.9&0.99&0.879&A&A\\ \midrule 
14\_0&0.75&0.99&0.771&A&A/R\\ \midrule 
16\_0&0.85&0.9&0.954&R&R\\ \midrule 
14\_4&0.85&0.99&0.773&A&A\\ \midrule 
16\_1&0.8&0.9&0.918&R&R\\ \midrule 
17\_1&0.8&0.94&0.874&R&A/R\\ \midrule 
17\_1&0.85&0.99&0.874&A&A/R\\ \midrule 
16\_4&0.9&0.99&0.833&A&A\\ \midrule 
14\_1&0.9&0.94&0.740&A&A\\ \midrule 
17\_0&0.8&0.9&0.968&R&R\\ \midrule 
14\_2&0.75&0.96&0.764&A&A/R\\ \midrule 
15\_0&0.85&0.96&0.984&R&R\\ \midrule 
14\_0&0.8&0.96&0.771&A&A\\ \midrule 
14\_4&0.75&0.96&0.773&A&A/R\\ \midrule 
16\_3&0.75&0.9&0.879&R&A/R\\ \midrule 
17\_2&0.9&0.94&0.998&R&R\\ \midrule 
15\_2&0.85&0.9&0.905&A&A/R\\ \midrule 
14\_4&0.8&0.94&0.773&A&A\\ \midrule 
14\_2&0.8&0.94&0.764&A&A\\ \midrule 
16\_0&0.8&0.99&0.954&R&A/R\\ \midrule 
17\_2&0.85&0.9&0.998&R&R\\ \midrule 
16\_3&0.85&0.96&0.879&A&A/R\\ \midrule 
14\_2&0.75&0.94&0.764&A&A/R\\ \midrule 
18\_1&0.8&0.94&0.993&R&R\\ \midrule 
18\_1&0.85&0.99&0.993&R&R\\ \midrule 
18\_1&0.9&0.99&0.993&R&R\\ \midrule 
16\_4&0.75&0.99&0.833&A&A/R\\ \midrule 
15\_0&0.9&0.99&0.984&R&A/R\\ \midrule 
15\_1&0.9&0.96&0.927&R&A/R\\ \midrule 
16\_0&0.9&0.94&0.954&A&A/R\\ \midrule 
17\_1&0.75&0.9&0.874&R&A/R\\ \midrule 
15\_0&0.8&0.94&0.984&R&R\\ \midrule 
17\_4&0.8&0.99&0.941&R&A/R\\ \midrule 
18\_2&0.85&0.9&0.918&R&R\\ \midrule 
14\_2&0.75&0.9&0.764&A&A/R\\ \midrule 
15\_0&0.8&0.99&0.984&R&A/R\\ \midrule 
14\_2&0.8&0.9&0.764&A&A\\ \midrule 
14\_4&0.8&0.9&0.773&A&A\\ \midrule 
14\_1&0.85&0.9&0.740&A&A\\ \midrule 
17\_0&0.75&0.99&0.968&A&A/R\\ \midrule 
14\_0&0.85&0.9&0.771&A&A\\ \midrule 
17\_1&0.75&0.94&0.874&R&A/R\\ \midrule 
18\_1&0.85&0.94&0.993&R&R\\ \midrule 
18\_1&0.8&0.99&0.993&R&R\\ \midrule 
18\_1&0.75&0.9&0.993&R&R\\ \midrule 
17\_3&0.8&0.96&0.914&R&A/R\\ \midrule 
18\_3&0.9&0.96&0.930&R&A/R\\ \midrule 
16\_2&0.8&0.94&0.987&A&A/R\\ \midrule 
14\_0&0.85&0.94&0.771&A&A\\ \midrule 
16\_2&0.85&0.99&0.987&A&A/R\\ \midrule 
16\_4&0.8&0.94&0.833&A&A/R\\ \midrule 
18\_1&0.85&0.96&0.993&R&R\\ \midrule 
16\_4&0.85&0.99&0.833&A&A\\ \midrule 
15\_2&0.9&0.94&0.905&A&A/R\\ \midrule 
15\_0&0.75&0.9&0.984&R&R\\ \midrule 
16\_0&0.9&0.99&0.954&A&A/R\\ \midrule 
15\_4&0.8&0.9&0.941&R&R\\ \midrule 
17\_0&0.75&0.96&0.968&R&R\\ \midrule 
15\_3&0.8&0.96&0.804&R&A/R\\ \midrule 
18\_3&0.9&0.94&0.930&R&A/R\\ \midrule 
18\_3&0.85&0.94&0.930&R&A/R\\ \midrule 
16\_0&0.8&0.96&0.954&A&A/R\\ \midrule 
17\_4&0.85&0.96&0.941&R&A/R\\ \midrule 
14\_3&0.75&0.99&0.852&A&A/R\\ \midrule 
17\_2&0.85&0.96&0.998&R&R\\ \midrule 
17\_4&0.8&0.94&0.941&R&R\\ \midrule 
16\_2&0.8&0.96&0.987&A&A/R\\ \midrule 
17\_4&0.85&0.99&0.941&R&A/R\\ \midrule 
16\_3&0.8&0.96&0.879&A&A/R\\ \midrule 
17\_1&0.8&0.9&0.874&R&A/R\\ \midrule 
16\_2&0.75&0.96&0.987&R&R\\ \midrule 
15\_3&0.85&0.96&0.804&A&A\\ \midrule 
17\_2&0.8&0.94&0.998&R&R\\ \midrule 
14\_1&0.8&0.96&0.740&A&A\\ \midrule 
14\_0&0.85&0.99&0.771&A&A\\ \midrule 
16\_2&0.75&0.9&0.987&R&R\\ \midrule 
15\_2&0.85&0.94&0.905&A&A/R\\ \midrule 
14\_3&0.85&0.99&0.852&A&A/R\\ \midrule 
15\_3&0.85&0.9&0.804&R&A/R\\ \midrule 
14\_2&0.85&0.99&0.764&A&A\\ \midrule 
16\_3&0.8&0.99&0.879&A&A/R\\ \midrule 
17\_3&0.85&0.9&0.914&A&A/R\\ \midrule 
16\_0&0.85&0.96&0.954&A&A/R\\ \midrule 
14\_1&0.75&0.94&0.740&A&A\\ \midrule 
18\_4&0.8&0.9&0.907&R&R\\ \midrule 
18\_0&0.8&0.94&0.994&R&R\\ \midrule 
14\_3&0.8&0.99&0.852&A&A/R\\ \midrule 
18\_0&0.85&0.99&0.994&R&R\\ \midrule 
18\_2&0.9&0.99&0.918&R&A/R\\ \midrule 
16\_3&0.75&0.99&0.879&A&A/R\\ \midrule 
15\_3&0.9&0.99&0.804&A&A\\ \midrule 
16\_4&0.75&0.94&0.833&A&A/R\\ \midrule 
15\_2&0.9&0.96&0.905&A&A/R\\ \midrule 
16\_1&0.9&0.94&0.918&R&A/R\\ \midrule 
18\_2&0.8&0.96&0.918&R&A/R\\ \midrule 
17\_2&0.9&0.96&0.998&R&R\\ \midrule 
15\_1&0.8&0.94&0.927&R&A/R\\ \midrule 
17\_3&0.9&0.99&0.914&A&A/R\\ \midrule 
15\_1&0.75&0.96&0.927&R&A/R\\ \midrule 
15\_1&0.8&0.99&0.927&R&A/R\\ \midrule 
14\_1&0.8&0.9&0.740&A&A\\ \midrule 
17\_4&0.75&0.94&0.941&R&R\\ \midrule 
18\_0&0.75&0.96&0.994&R&R\\ \midrule 
17\_1&0.75&0.99&0.874&R&A/R\\ \midrule 
17\_2&0.75&0.94&0.998&R&R\\ \midrule 
18\_0&0.8&0.99&0.994&R&R\\ \midrule 
18\_0&0.75&0.9&0.994&R&R\\ \midrule 
17\_2&0.8&0.96&0.998&R&R\\ \midrule 
18\_2&0.9&0.96&0.918&R&A/R\\ \midrule 
16\_1&0.8&0.94&0.918&R&A/R\\ \midrule 
16\_1&0.85&0.99&0.918&R&A/R\\ \midrule 
18\_2&0.85&0.96&0.918&R&A/R\\ \midrule 
15\_1&0.9&0.94&0.927&R&A/R\\ \midrule 
15\_1&0.75&0.9&0.927&R&R\\ \midrule 
16\_1&0.9&0.99&0.918&R&A/R\\ \midrule 
14\_3&0.75&0.96&0.852&A&A/R\\ \midrule 
18\_3&0.75&0.94&0.930&R&A/R\\ \midrule 
15\_2&0.8&0.96&0.905&A&A/R\\ \midrule 
18\_0&0.9&0.94&0.994&R&R\\ \midrule 
18\_1&0.75&0.99&0.993&R&R\\ \midrule 
18\_2&0.85&0.94&0.918&R&A/R\\ \midrule 
17\_3&0.85&0.96&0.914&A&A/R\\ \midrule 
14\_2&0.75&0.99&0.764&A&A/R\\ \midrule 
15\_3&0.85&0.94&0.804&A&A\\ \midrule 
17\_2&0.8&0.9&0.998&R&R\\ \midrule 
16\_3&0.75&0.96&0.879&A&A/R\\ \midrule 
14\_2&0.85&0.9&0.764&A&A\\ \midrule 
15\_2&0.85&0.96&0.905&A&A/R\\ \midrule 
14\_1&0.9&0.96&0.740&A&A\\ \midrule 
16\_1&0.75&0.9&0.918&R&R\\ \midrule 
17\_4&0.9&0.94&0.941&R&R\\ \midrule 
15\_4&0.85&0.9&0.941&R&R\\ \midrule 
16\_4&0.8&0.99&0.833&A&A/R\\ \midrule 
15\_0&0.75&0.99&0.984&R&A/R\\ \midrule 
15\_0&0.85&0.9&0.984&R&R\\ \midrule 
16\_2&0.8&0.99&0.987&A&A/R\\ \midrule 
17\_0&0.85&0.9&0.968&A&A/R\\ \midrule 
16\_1&0.85&0.96&0.918&R&A/R\\ \midrule 
14\_0&0.75&0.94&0.771&A&A/R\\ \midrule 
16\_2&0.9&0.96&0.987&A&A/R\\ \midrule 
18\_3&0.8&0.9&0.930&R&R\\ \midrule 
18\_3&0.8&0.94&0.930&R&A/R\\ \midrule 
14\_2&0.8&0.99&0.764&A&A\\ \midrule 
16\_1&0.9&0.96&0.918&R&A/R\\ \midrule 
18\_3&0.85&0.99&0.930&R&A/R\\ \midrule 
14\_4&0.8&0.99&0.773&A&A\\ \midrule 
16\_0&0.9&0.96&0.954&A&A/R\\ \midrule 
18\_3&0.9&0.99&0.930&R&A/R\\ \midrule 
16\_2&0.75&0.99&0.987&A&A/R\\ \midrule 
14\_0&0.85&0.96&0.771&A&A\\ \midrule 
15\_2&0.9&0.99&0.905&A&A/R\\ \midrule 
16\_1&0.75&0.94&0.918&R&A/R\\ \midrule 
16\_4&0.9&0.94&0.833&A&A\\ \midrule 
15\_3&0.9&0.96&0.804&A&A\\ \midrule 
16\_2&0.9&0.94&0.987&A&A/R\\ \midrule 
18\_3&0.8&0.96&0.930&R&A/R\\ \midrule 
17\_3&0.9&0.96&0.914&A&A/R\\ \midrule 
15\_2&0.8&0.94&0.905&A&A/R\\ \midrule 
17\_0&0.8&0.99&0.968&A&A/R\\ \midrule 
15\_2&0.75&0.94&0.905&R&A/R\\ \midrule 
18\_4&0.85&0.9&0.907&R&R\\ \midrule 
15\_4&0.8&0.99&0.941&R&A/R\\ \midrule 
15\_4&0.75&0.94&0.941&R&R\\ \midrule 
14\_4&0.75&0.9&0.773&A&A/R\\ \midrule 
14\_0&0.8&0.9&0.771&A&A\\ \midrule 
14\_0&0.9&0.99&0.771&A&A\\ \midrule 
18\_1&0.75&0.96&0.993&R&R\\ \midrule 
17\_3&0.75&0.94&0.914&R&A/R\\ \midrule 
18\_3&0.75&0.9&0.930&R&R\\ \midrule 
17\_4&0.85&0.94&0.941&R&R\\ \midrule 
16\_0&0.8&0.94&0.954&A&A/R\\ \midrule 
15\_0&0.85&0.99&0.984&R&A/R\\ \midrule 
16\_0&0.85&0.99&0.954&A&A/R\\ \midrule 
15\_4&0.75&0.9&0.941&R&R\\ \midrule 
15\_1&0.85&0.99&0.927&R&A/R\\ \midrule 
18\_3&0.85&0.96&0.930&R&A/R\\ \midrule 
15\_0&0.9&0.94&0.984&R&R\\ \midrule 
15\_2&0.75&0.9&0.905&R&R\\ \midrule 
15\_2&0.85&0.99&0.905&A&A/R\\ \midrule 
15\_2&0.8&0.9&0.905&A&A/R\\ \midrule 
15\_3&0.85&0.99&0.804&A&A\\ \midrule 
18\_2&0.75&0.94&0.918&R&A/R\\ \midrule 
18\_4&0.75&0.94&0.907&R&A/R\\ \midrule 
15\_1&0.8&0.96&0.927&R&A/R\\ \midrule 
18\_1&0.9&0.94&0.993&R&R\\ \midrule 
18\_0&0.75&0.99&0.994&R&R\\ \midrule 
14\_3&0.85&0.9&0.852&A&A/R\\ \midrule 
16\_3&0.85&0.9&0.879&A&A/R\\ \midrule 
16\_1&0.8&0.96&0.918&R&A/R\\ \midrule 
14\_1&0.85&0.99&0.740&A&A\\ \midrule 
15\_0&0.85&0.94&0.984&R&R\\ \midrule 
17\_2&0.85&0.99&0.998&R&R\\ \midrule 
14\_2&0.9&0.94&0.764&A&A\\ \midrule 
14\_4&0.9&0.94&0.773&A&A\\ \midrule 
17\_3&0.8&0.9&0.914&R&R\\ \midrule 
16\_0&0.75&0.96&0.954&R&A/R\\ \midrule 
14\_0&0.9&0.96&0.771&A&A\\ \midrule 
17\_1&0.9&0.94&0.874&A&A\\ \midrule 
16\_0&0.75&0.9&0.954&R&R\\ \midrule 
14\_1&0.8&0.94&0.740&A&A\\ \midrule 
15\_1&0.75&0.99&0.927&R&A/R\\ \midrule 
17\_1&0.85&0.9&0.874&R&A/R\\ \midrule 
18\_2&0.8&0.9&0.918&R&R\\ \midrule 
18\_2&0.8&0.94&0.918&R&A/R\\ \midrule 
14\_1&0.8&0.99&0.740&A&A\\ \midrule 
18\_2&0.85&0.99&0.918&R&A/R\\ \midrule 
18\_4&0.9&0.99&0.907&A&A/R\\ \midrule 
16\_1&0.75&0.99&0.918&R&A/R\\ \midrule 
14\_1&0.85&0.96&0.740&A&A\\ \midrule 
16\_0&0.75&0.94&0.954&R&R\\ \midrule 
15\_4&0.9&0.96&0.941&R&A/R\\ \midrule 
17\_2&0.75&0.9&0.998&R&R\\ \midrule 
16\_3&0.9&0.94&0.879&A&A\\ \midrule 
18\_0&0.8&0.96&0.994&R&R\\ \midrule 
17\_4&0.75&0.9&0.941&R&R\\ \midrule 
15\_3&0.8&0.94&0.804&R&A/R\\ \midrule 
17\_1&0.8&0.99&0.874&R&A/R\\ \midrule 
18\_1&0.85&0.9&0.993&R&R\\ \midrule 
15\_3&0.75&0.94&0.804&R&A/R\\ \midrule 
14\_1&0.75&0.9&0.740&A&A\\ \midrule 
17\_1&0.9&0.99&0.874&A&A\\ \midrule 
15\_3&0.75&0.96&0.804&R&A/R\\ \midrule 
18\_4&0.75&0.96&0.907&R&A/R\\ \midrule 
14\_1&0.9&0.99&0.740&A&A\\ \midrule 
18\_2&0.75&0.96&0.918&R&A/R\\ \midrule 
18\_4&0.8&0.99&0.907&R&A/R\\ \midrule 
18\_4&0.75&0.9&0.907&R&R\\ \midrule 
18\_2&0.75&0.9&0.918&R&R\\ \midrule 
17\_4&0.8&0.96&0.941&R&A/R\\ \midrule 
14\_3&0.85&0.94&0.852&A&A/R\\ \midrule 
18\_4&0.9&0.96&0.907&A&A/R\\ \midrule 
17\_3&0.75&0.9&0.914&R&R\\ \midrule 
17\_4&0.9&0.96&0.941&R&A/R\\ \midrule 
15\_3&0.75&0.9&0.804&R&A/R\\ \midrule 
16\_0&0.8&0.9&0.954&R&R\\ \midrule 
17\_3&0.75&0.96&0.914&R&A/R\\ \midrule 
15\_3&0.8&0.9&0.804&R&A/R\\ \midrule 
18\_1&0.75&0.94&0.993&R&R\\ \midrule 
16\_1&0.85&0.94&0.918&R&A/R\\ \midrule 
16\_3&0.85&0.94&0.879&A&A/R\\ \midrule 
18\_4&0.9&0.94&0.907&A&A/R\\ \midrule 
15\_0&0.8&0.96&0.984&R&R\\ \midrule 
16\_0&0.85&0.94&0.954&A&A/R\\ \midrule 
14\_4&0.85&0.9&0.773&A&A\\ \midrule 
18\_3&0.75&0.99&0.930&R&A/R\\ \midrule 
17\_0&0.9&0.96&0.968&A&A/R\\ \midrule 
18\_0&0.85&0.94&0.994&R&R\\ \midrule 
17\_1&0.9&0.96&0.874&A&A\\ \midrule 
16\_4&0.8&0.96&0.833&A&A/R\\ \midrule 
16\_2&0.85&0.9&0.987&A&A/R\\ \midrule 
17\_1&0.85&0.96&0.874&A&A/R\\ \midrule 
14\_4&0.75&0.99&0.773&A&A/R\\ \midrule 
16\_4&0.85&0.9&0.833&A&A\\ \midrule 
17\_3&0.8&0.94&0.914&R&A/R\\ \midrule 
15\_1&0.85&0.94&0.927&R&A/R\\ \midrule 
17\_3&0.85&0.99&0.914&A&A/R\\ \midrule 
14\_3&0.9&0.94&0.852&A&A\\ \midrule 
17\_4&0.8&0.9&0.941&R&R\\ \midrule 
16\_1&0.75&0.96&0.918&R&A/R\\ \midrule 
15\_4&0.85&0.96&0.941&R&A/R\\ \midrule 
14\_2&0.8&0.96&0.764&A&A\\ \midrule 
14\_3&0.9&0.96&0.852&A&A\\ \midrule 
17\_0&0.9&0.94&0.968&A&A/R\\ \midrule 
14\_4&0.8&0.96&0.773&A&A\\ \midrule 
16\_3&0.9&0.96&0.879&A&A\\ \midrule 
15\_4&0.75&0.99&0.941&R&A/R\\ \midrule 
14\_0&0.8&0.94&0.771&A&A\\ \midrule 
17\_4&0.85&0.9&0.941&R&R\\ \midrule 
15\_2&0.75&0.99&0.905&A&A/R\\ \midrule 
14\_4&0.75&0.94&0.773&A&A/R\\ \midrule 
17\_4&0.9&0.99&0.941&A&A/R\\ \midrule 
18\_1&0.8&0.9&0.993&R&R\\ \midrule 
15\_4&0.85&0.99&0.941&R&A/R\\ \midrule 
14\_0&0.8&0.99&0.771&A&A\\ \midrule 
17\_2&0.9&0.99&0.998&R&R\\ \midrule 
14\_0&0.75&0.9&0.771&A&A/R\\ \midrule 
14\_4&0.85&0.96&0.773&A&A\\ \midrule 
16\_0&0.75&0.99&0.954&R&A/R\\ \midrule 
14\_2&0.85&0.96&0.764&A&A\\ \midrule 
15\_4&0.9&0.99&0.941&A&A/R\\ \midrule 
16\_3&0.75&0.94&0.879&R&A/R\\ \midrule 
16\_4&0.85&0.94&0.833&A&A\\ \midrule 
18\_1&0.8&0.96&0.993&R&R\\ \midrule 
15\_0&0.75&0.96&0.984&R&R\\ \midrule 
16\_2&0.85&0.94&0.987&A&A/R\\ \midrule 
15\_4&0.8&0.94&0.941&R&R\\ \midrule 
17\_2&0.8&0.99&0.998&R&R\\ \midrule 
18\_0&0.85&0.9&0.994&R&R\\ \midrule 
15\_0&0.75&0.94&0.984&R&R\\ \midrule 
15\_4&0.75&0.96&0.941&R&A/R\\ \midrule 
17\_0&0.9&0.99&0.968&A&A/R\\ \midrule 
15\_2&0.8&0.99&0.905&A&A/R\\ \midrule 
15\_2&0.75&0.96&0.905&R&A/R\\  
		\bottomrule\end{longtable}\raggedbottom

\subsubsection{Real-world PCs} 
In the following table, the first column indicates the benchmark, the second indicates the time required for the test, and the third column indicates the test outcome. `A' represents {\accept} and `R' represents {\reject}. 

\begin{longtable} {p{4.8cm}p{2.6cm}p{2.5cm}} \caption{The Extended Table}\\
	 \toprule Benchmark&$\teq$(s)&Result\\ 
	 \midrule 
	 or-70-10-8-UC-10\_0&23.2&A\\  
or-70-10-8-UC-10\_1&22.72&R\\  
or-70-10-8-UC-10\_2&22.92&R\\  
or-70-10-8-UC-10\_3&22.87&R\\  
or-70-10-8-UC-10\_4&22.78&R\\  
or-70-10-8-UC-10\_5&23.06&R\\  
or-70-10-8-UC-10\_6&22.99&R\\  
or-70-10-8-UC-10\_7&22.93&R\\  
or-70-10-8-UC-10\_8&22.82&R\\  
or-70-10-8-UC-10\_9&22.82&R\\  
s641\_15\_7\_0&33.66&A\\  
s641\_15\_7\_1&33.4&R\\  
s641\_15\_7\_2&33.45&R\\  
s641\_15\_7\_3&33.32&R\\  
s641\_15\_7\_4&33.51&R\\  
s641\_15\_7\_5&33.21&R\\  
s641\_15\_7\_6&33.46&R\\  
s641\_15\_7\_7&33.23&R\\  
s641\_15\_7\_8&33.61&R\\  
s641\_15\_7\_9&33.51&R\\  
or-50-5-4\_0&414.17&A\\  
or-50-5-4\_1&414.84&R\\  
or-50-5-4\_2&410.16&R\\  
or-50-5-4\_3&414.15&R\\  
or-50-5-4\_4&410.07&R\\  
or-50-5-4\_5&412.27&R\\  
or-50-5-4\_6&414.77&R\\  
or-50-5-4\_7&415.19&R\\  
or-50-5-4\_8&416.84&R\\  
or-50-5-4\_9&408.59&R\\  
ProjectService3.sk\_12\_55\_0&356.58&A\\  
ProjectService3.sk\_12\_55\_1&353.77&R\\  
ProjectService3.sk\_12\_55\_2&355.93&R\\  
ProjectService3.sk\_12\_55\_3&356.11&R\\  
ProjectService3.sk\_12\_55\_4&356.15&A\\  
ProjectService3.sk\_12\_55\_5&355.64&R\\  
ProjectService3.sk\_12\_55\_6&357.89&R\\  
ProjectService3.sk\_12\_55\_7&356.69&R\\  
ProjectService3.sk\_12\_55\_8&353.36&R\\  
ProjectService3.sk\_12\_55\_9&356.14&R\\  
s713\_15\_7\_0&24.56&R\\  
s713\_15\_7\_1&24.68&R\\  
s713\_15\_7\_2&24.28&R\\  
s713\_15\_7\_3&24.47&R\\  
s713\_15\_7\_4&24.65&R\\  
s713\_15\_7\_5&24.32&R\\  
s713\_15\_7\_6&24.4&R\\  
s713\_15\_7\_7&24.39&R\\  
s713\_15\_7\_8&24.86&A\\  
s713\_15\_7\_9&24.41&R\\  
or-100-10-2-UC-30\_0&31.11&R\\  
or-100-10-2-UC-30\_1&31.16&R\\  
or-100-10-2-UC-30\_2&31.04&R\\  
or-100-10-2-UC-30\_3&31.13&R\\  
or-100-10-2-UC-30\_4&31.14&R\\  
or-100-10-2-UC-30\_5&31.04&A\\  
or-100-10-2-UC-30\_6&31.03&R\\  
or-100-10-2-UC-30\_7&31.13&R\\  
or-100-10-2-UC-30\_8&31.17&R\\  
or-100-10-2-UC-30\_9&31.0&R\\  
s1423a\_3\_2\_0&153.8&R\\  
s1423a\_3\_2\_1&152.37&R\\  
s1423a\_3\_2\_2&152.01&R\\  
s1423a\_3\_2\_3&150.96&R\\  
s1423a\_3\_2\_4&152.64&R\\  
s1423a\_3\_2\_5&153.13&A\\  
s1423a\_3\_2\_6&151.52&R\\  
s1423a\_3\_2\_7&152.53&R\\  
s1423a\_3\_2\_8&152.4&R\\  
s1423a\_3\_2\_9&152.81&R\\  
s1423a\_7\_4\_0&104.28&R\\  
s1423a\_7\_4\_1&103.4&R\\  
s1423a\_7\_4\_2&103.82&R\\  
s1423a\_7\_4\_3&104.18&R\\  
s1423a\_7\_4\_4&103.95&R\\  
s1423a\_7\_4\_5&103.59&R\\  
s1423a\_7\_4\_6&104.31&R\\  
s1423a\_7\_4\_7&104.93&R\\  
s1423a\_7\_4\_8&104.93&A\\  
s1423a\_7\_4\_9&103.51&R\\  
or-50-5-10\_0&282.09&R\\  
or-50-5-10\_1&282.49&R\\  
or-50-5-10\_2&279.63&R\\  
or-50-5-10\_3&281.8&R\\  
or-50-5-10\_4&280.69&R\\  
or-50-5-10\_5&279.91&R\\  
or-50-5-10\_6&283.05&A\\  
or-50-5-10\_7&282.69&R\\  
or-50-5-10\_8&279.65&R\\  
or-50-5-10\_9&282.97&R\\  
or-60-20-6-UC-20\_0&359.89&R\\  
or-60-20-6-UC-20\_1&362.3&R\\  
or-60-20-6-UC-20\_2&363.1&R\\  
or-60-20-6-UC-20\_3&363.11&R\\  
or-60-20-6-UC-20\_4&362.76&R\\  
or-60-20-6-UC-20\_5&358.76&R\\  
or-60-20-6-UC-20\_6&363.32&A\\  
or-60-20-6-UC-20\_7&358.41&R\\  
or-60-20-6-UC-20\_8&358.8&R\\  
or-60-20-6-UC-20\_9&362.8&R\\  
\bottomrule\end{longtable}

\end{document}